\newcommand{\rnote}[1]{\footnote{\color{purple}\textbf{Rocco:} {#1}}}
\newcommand{\shivam}[1]{{\color{blue} \textbf{Shivam:} {#1}}}
\newcommand{\ignore}[1]{}
\newcommand{\maj}{\mathop{\mathrm{maj}\/}}
\newcommand{\GL}{\textsc{Goldreich--Levin}}
\newcommand{\ConstructDT}{\textsc{ConstructDT}}
\newcommand{\ConstructImplicitDT}{\textsc{ConstructImplicitDT}}
\newcommand{\Sumset}{\textsc{Simulate-Sumset}}
\newcommand{\ImplicitSumset}{\textsc{Implicit-Simulate-Sumset}}
\newcommand{\ImplicitGL}{\textsc{Implicit-GL}}
\newcommand{\BLR}{\textsc{Linearity-Test}}
\newcommand{\Tsum}{\calT_\mathrm{sum}}
\title{Approximating Sumset Size
}
\author{Anindya De \vspace{3pt}\\
\small{\sl University of Pennsylvania} \and Shivam
Nadimpalli \vspace{3pt} \\ \small{\sl Columbia University} \and Rocco A.
Servedio \vspace{3pt}  \\ \small{\sl Columbia University}\vspace*{15pt}
}
\begin{document}

\pagenumbering{gobble}

\maketitle


\begin{abstract}

Given a subset $A$ of the $n$-dimensional Boolean hypercube $\F_2^n$, the \emph{sumset} $A+A$ is the set $\{a+a': a, a' \in A\}$ where addition is in $\F_2^n$.  Sumsets play an important role in additive combinatorics, where they feature in many central results of the field.

The main result of this paper is a sublinear-time algorithm for the problem of \emph{sumset size estimation}. In more detail, our algorithm is given oracle access to (the indicator function of) an arbitrary $A \subseteq \F_2^n$ and an accuracy parameter $\eps > 0$, and with high probability it outputs a value $0 \leq v \leq 1$ that is $\pm \eps$-close to $\Vol(A' + A')$ for some perturbation $A' \subseteq A$ of $A$ satisfying $\Vol(A \setminus A') \leq \eps.$  It is easy to see that without the relaxation of dealing with $A'$ rather than $A$, any algorithm for estimating $\Vol(A+A)$ to any nontrivial accuracy must make $2^{\Omega(n)}$ queries. In contrast, we give an algorithm whose query complexity depends only on $\eps$ and is completely independent of the ambient dimension $n$.

\end{abstract}

\newpage 
\setcounter{page}{1}
\pagenumbering{arabic} 


\section{Introduction}

Recent decades have witnessed a paradigm shift in the notion of what constitutes an ``efficient algorithm'' in algorithms and complexity theory.  
Motivated both by practical applications and theoretical considerations, the traditional gold standard of linear time as the ultimate benchmark for algorithmic efficiency has given way to the notion of \emph{sublinear-time} and \emph{sublinear-query} algorithms, as introduced by Blum and Kannan~\cite{BlumKannan:89} and Blum, Luby and Rubinfeld~\cite{BLR93}.  The study of sublinear algorithms is flourishing, with deep connections to many other areas including PCPs, hardness of approximation, and streaming algorithms (see e.g.~the surveys~\cite{Rubinfeld:06survey, Goldreich17book, Fischer, R00}).

The current paper is at the confluence of two different lines of research in the area of sublinear algorithms:

\begin{enumerate}

\item The first strand of work deals with sublinear algorithms to \emph{approximately compute (numerical-valued) functions on various combinatorial objects}. Example problems of this sort include (i) estimating the weight of a minimum spanning tree~\cite{ChazelleRT05}; (ii) approximating the minimum vertex cover size in a graph~\cite{PARNAS2007183}; and (iii) approximating the number of $k$-cliques in an undirected graph~\cite{EdenRonSeshadri}.  We note that for the first two of these results, the number of local queries that are made to the input combinatorial object is completely independent of its size.

\item The second strand of work is on {\em property testing of Boolean-valued functions}. Given a class of Boolean-valued functions ${\cal C}$, a testing algorithm for ${\cal C}$ is a query-efficient procedure which, given oracle access to an arbitrary Boolean-valued function $f$, distinguishes between the two cases that (i) $f$ belongs to class $\mathcal{C}$, versus (ii) $f$ is $\epsilon$-far from every function in $\mathcal{C}$. Flagship results in this area include algorithms for linearity testing~\cite{BLR93}, testing of low-degree polynomials ~\cite{RS96, jutpatrudzuc04}, junta testing~\cite{FKRSS03, Blaisstoc09}, and monotonicity testing~\cite{GGLRS, KhotMinzer15}.  Here too, for the first three of these properties, the query complexity of the testing algorithms depend only on the accuracy parameter $\eps$ and are completely independent of the ambient dimension $n$ of the function $f$.
\end{enumerate}

In recent years, a nascent line of work has emerged at the intersection of these two strands, where the high-level goal is to approximately compute various numerical parameters of Boolean-valued functions.  As an example, building on the work of Kothari {et~al.}~\cite{KothariNOW14}, Neeman~\cite{Neeman2013surface} gave an algorithm to approximate the ``surface area'' of a Boolean-valued function on $\R^n$, which is a fundamental measure of its complexity~\cite{KOS:08}. The~\cite{Neeman2013surface} algorithm has a query complexity of $\mathrm{poly}(S)$ if the target surface area is $S$, which is completely independent of the ambient dimension $n$.  Fitting the same motif is the work of Ron {et~al.}~\cite{RonWinfluence} who studied the problem of approximating the ``total influence'' (or equivalently, ``average sensitivity'') of a Boolean function. They showed that the optimal query complexity to approximate the influence $\Inf[f]$ of an arbitrary $n$-variable Boolean function $f$ to constant relative error is $\Theta(n/\Inf[f])$, and that this can be strengthened to essentially $\sqrt{n}/\Inf[f]$ for monotone functions. More recently, in closely related work Rubinfeld and Vasiliyan~\cite{DBLP:conf/approx/RubinfeldV19} have given a constant-query algorithm to approximate the ``noise  sensitivity'' of a Boolean function.

We note that each of the above three numerical parameters --- surface area, total influence, and noise sensitivity --- is essentially a measure of the ``smoothness'' of the Boolean function in question.  In contrast, in this work we are interested in the \emph{sumset size}, which has a rather different flavor and, as discussed below, is intimately connected to the subspace structure of the function.

\paragraph{Sumsets.} Let $A \subseteq \mathbb{F}_2^n$ be an arbitrary subset (which may of course be viewed as a Boolean function by considering its $\zo$-valued characteristic function). One of the most fundamental operations on such a set $A$ is to consider the {\em sumset} $A+  A$, defined as 
\[
A + A := \{x + y: x, y  \in A \}. 
\]
Here `$+$' is the group operation in $\mathbb{F}_2^n$. 
Note that for $A$ an affine subspace we have that $|A + A| = |A|$, and the converse (the only sets $A$ for which $|A +A|=|A|$ are affine subspaces) is also easily seen to hold.  In fact,  something significantly stronger is true: 
The celebrated Freiman--Ruzsa theorem~\cite{freiman1973foundations,AST_1999__258__323_0,apde.2012.5.627} states that if $|A+A| \le K \cdot|A|$, then $A$ is contained inside an affine subspace $H$ such that $|H| \le O_K(1) \cdot |A|$. Thus, the value of $|A+A|$ \emph{vis-a-vis} $|A|$ can be seen as a measure {the ``subspace structure''} of $A$.

\subsection{The Question We Consider} 

For $A\sse\F_2^n$, we define $\Vol(A) \coloneqq |A|/2^n \in [0,1]$ to be the \emph{normalized size} or \emph{volume} of $A$.  This paper is motivated by the following basic algorithmic problem about sumsets:

\begin{quote}
{\bf Sumset size estimation (naive formulation):}
Given black-box oracle access to a set $A \subseteq \F_2^n$ (via its characteristic function 
$A: \mathbb{F}_2^n \rightarrow \{0,1\}$), can we estimate the $\Vol(A + A)$ while making only ``few'' oracle calls to $A$?
\end{quote}

At first glance this seems to be a difficult problem, since to confirm that a given point $z$ does not belong to $A + A$ we must verify that at least one of $x,y \notin A$ for each of the $2^n$ pairs $(x,y)$ satisfying $x+y =z$. Indeed, for the above naive problem formulation, any algorithm must make $2^{\Omega(n)}$ queries even to distinguish between the two extreme cases that $\Vol(A+A)=0$ (i.e.~$A=\emptyset$) versus $\Vol(A+A) = 1 - \exp(-\Theta(n))$. To see this, suppose that $A$ is a uniform random subset of $2^{0.51n}$ many elements from $\F_2^n$. It is clear that any algorithm will need $\Omega(2^{0.49n})$ queries to distinguish such an $A$ from the empty set, and an easy calculation shows that such a random $A$ will with extremely high probability have $\Vol(A+A) = 1 - \exp(-\Theta(n)).$

This simple example already shows that some care must be taken to formulate the ``right'' version of the sumset size estimation problem.  This situation is analogous to the surface area testing problem that was studied in~\cite{KothariNOW14,Neeman2013surface}: In that setting, given oracle access to any set $A$, by adding a measure zero set $R$ to $A$ (which is undetectable by an algorithm with oracle access to $A$) it is possible to ``blow up'' the surface area of $A \cup R$ to an arbitrarily large value.  Thus the goal in~\cite{KothariNOW14, Neeman2013surface} is to find a value $S$ such that $\mathrm{surf}(A) \le S \le \mathrm{surf}(B)$ for a set $B$ that is ``close to $A$.''  Note that for surface area, it may be possible to dramatically increase the surface area of a set $A$ either by adding a small subset of new points or removing a small subset of existing points from $A$. In contrast, for sumset size it is clear that removing points from $A$ can never cause the sumset size to increase, and moreover  adding a small (random) collection $R \subseteq \F_2^n$ of $2^{0.51n}$ points to $A$ can always cause $\Vol((A \cup R) + (A \cup R))$ to become extremely close to 1.  Hence for our sumset size estimation problem we only allow \emph{subsets} of $A$ as the permissible ``close to $A$'' sets.

We thus arrive at the following formulation of our problem:

\begin{quote}
{\bf Sumset size estimation:}
Given black-box oracle access to a set $A \subseteq \F_2^n$ and an accuracy parameter $\eps >0$, compute $\Vol(A' + A')$ to additive accuracy $\pm \eps$ for some subset $A' \subseteq A$ which has $\Vol(A\setminus A') \leq \eps.$
\end{quote}

\subsection{Motivation} \label{sec:motivation}

Given the importance of sumsets in additive combinatorics, we feel that it is natural to investigate algorithmic questions dealing with basic properties of sumsets; estimating the size of a sumset is a natural algorithmic question of this sort.  
We further remark that while there is no direct technical connection to the present work, the path which led us to the sumset size estimation problem originated in an effort to develop a query-efficient algorithm for \emph{convexity testing} (i.e.~testing whether a subset $S \subseteq \R^n$ is convex versus far from convex, where the standard Normal distribution ${\cal N}(0,1)^n$ provides the underlying distance measure on $\R^n$).  In particular, the recent characterization by Shenfeld and van Handel of equality cases for the Ehrhard--Borell inequality (see Theorem~1.2 of \cite{rvh-equality}) implies that a closed symmetric set $S \subseteq \R^n$ is convex if and only if the Gaussian volume of $S$ equals the Gaussian volume of ${\frac {S + S} 2}$.   We believe that a robust version of this theorem might be useful for convexity testing; this naturally motivates a Gaussian space version of the sumset size estimation question, where now the Minskowski sum of sets in $\R^n$ plays the role of sumsets over $\F_2^n$.  We hope that the ideas and ingredients in the current work may eventually be of use for the Gaussian space Minkowski sum size estimation problem, and perhaps ultimately for convexity testing.

\subsection{Our Main Result}
Our main result is an algorithm for the subset size estimation problem which makes only \emph{constantly} many queries, independent of the ambient dimension $n$. We state our main result informally below:

\begin{inftheorem}
Given oracle access to any set $A \subseteq \mathbb{F}_2^n$ and an error parameter $\epsilon>0$, there is an algorithm making $O_{\epsilon}(1)$ queries to $A$ with the following guarantee:  with high probability, the algorithm outputs a value $0 \leq v \leq 1$ such that $\Vol(A' + A') - \eps \leq v \leq \Vol(A'+A')+\eps$ for some set $A' \subseteq A$ such that $\Vol(A \setminus A') \le \epsilon$.
\end{inftheorem} 

In fact, as we describe in more detail later, our algorithm does more than just approximate the volume of $A' + A'$:  it outputs a high-accuracy approximate oracle for the set $A' +A'$, given which it is trivially easy to approximate $\Vol(A' + A')$ by random sampling.  (As we will see, our algorithm also outputs an exact oracle for the set $A'$.)  Later we will give a formal definition of what it means to ``output an oracle'' for a set $B$; informally, it means we give a description of an oracle algorithm (which uses a black-box oracle to $A$) which, on any input $x$, (i) determines whether $x \in B$, and (ii) makes few invocations to the oracle for $A$.  We further note that the running time of our algorithm is {linear in $n$}   
 (note that even writing down an $n$-bit string as a query input to $A$ takes linear time).

\subsection{Technical Overview}

\subsubsection{A Conceptual Overview of the Algorithm}

In this subsection we give a technical overview of our algorithm. At a high level, our approach is based on the {\em structure versus randomness} paradigm that has proven to be very influential  in additive combinatorics~\cite{TaoVu:06}  and property testing. Our algorithm relies on two main ingredients, which we describe below.

To explain the key ingredients we need the notion of {quasirandomness} from additive combinatorics. For a set $A \subseteq \mathbb{F}_2^n$, we say $A$ is $\epsilon$-\emph{quasirandom} if each non-empty Fourier coefficient $\widehat{A}(\alpha), 0^n \neq \alpha \in \F_2^n$, satisfies $|\widehat{A}(\alpha)| \leq \eps,$ where we are viewing $A$ as a characteristic function over the domain $\F_2^n.$  The definition of the Fourier transform extends to the more general setting in which $A$ is a characteristic function whose domain is some coset $x + H$ (of size $2^{n-k}$) of $\F_2^n$. This is done by identifying $H$ with $\F_2^k$ via a homomorphism; we give details later  in~\Cref{def:quasirandom}.

The first ingredient is the following: Let $H$ be a linear subspace of $\mathbb{F}_2^n$, and let $B_x  \subseteq x+H$, $B_y \subseteq y+H$ be subsets of cosets $x+H$ and $y+H$ respectively. Suppose that both $|B_x|/|x+H|$ and $|B_y|/|y+H|$ are at least $\tau$, and that both $B_x$ and $B_y$ are $\epsilon$-quasirandom (viewed as characteristic functions whose domains are the cosets $x+H$ and $y+H$ respectively).  Our first ingredient is the simple but useful observation that if $\tau \gg \sqrt{\epsilon}$, then the set $B_x + B_y$ (which is easily seen to be a subset of the coset $x+y+H$) must be almost the entire coset $x+y + H$  (see \Cref{prop:anindya-bogolyubov}).

The second ingredient is Green's well-known ``regularity lemma'' for Boolean functions~\cite{Green:05}. To explain this, for any set $A \subseteq \F_2^n$, subspace $H$ of $\F_2^n$, and coset $H'$, let $A_{H'} := A \cap H'$ be the intersection of $A$ with the coset $H'$. Roughly speaking, Green's regularity lemma shows that for any $A \subseteq \F_2^n$, there is a  subspace $H$ of codimension at most $O_{\gamma,\eps}(1)$\ignore{\rnote{Maybe we save the special treat of the exact $2 \uparrow\uparrow \frac{1}{\gamma \epsilon^2}$ quantitative dependence for later? \shivam{Sounds good :)}}} such that the following holds: With probability $1-\gamma$ over a uniform random choice of cosets $\{H_i\}$, the set $A_{H_{i}}$ is $\epsilon$-quasirandom (viewed as a subset of the coset $H_i$). Moreover, the proof of the regularity lemma gives an iterative procedure to identify $H$; very roughly speaking, until the procedure terminates, at each stage it identifies a vector $\alpha \in \F_2^n$ such that $|\widehat{A}(\alpha)|$ is large, and sets $H$ to be the span of the vectors identified so far. 

With these two ingredients in place, we are ready to explain (at least at a qualitative level; we defer discussion of how to achieve the desired $O(1)$ query complexity to the next subsection) the algorithm for simulating an oracle to $A' + A'$.\ignore{which approximately computes the size of $A'+A'$.}\ignore{\rnote{This okay? Hopefully earlier we made it clear that really we're going to be simulating $A' + A'$ and that the $\Vol(A'+A')$ estimation is trivial given this}}  First, we run the algorithmic version of Green's regularity lemma; having done so, we have a subspace $H$ and we know that for most cosets $H'$, the set $A_{H'}$ is $\epsilon$-quasirandom. Let $k$ be the codimension of $H$ and let ${\cal B}'$ be a set of $2^k$ many coset representatives for the $2^k$ cosets of $H$. Let ${\cal B} \subseteq {\cal B}'$ be the subset consisting of those coset representatives $y \in {\cal B}'$ for which the set $A_{y+H}$ (i) is $\epsilon$-quasirandom and (ii) has density at least $\tau$ when viewed as a subset of $y+H$ (where $\tau$ is some carefully chosen parameter that we do not specify here). We note that given any coset $y+H$, condition (ii) can be checked using simple random sampling. Condition (i) is equivalent to checking that the set $A_{y+H}$ has no Fourier coefficient larger than $\epsilon$. This can be done using the celebrated Goldreich-Levin algorithm~\cite{goldreich-levin}.\footnote{To be more accurate, this requires a slight adaptation of the Goldreich-Levin algorithm because the domain here is a coset rather than the more familiar domain $\F_2^n$ for Goldreich-Levin.}  Thus, at this point our algorithm has determined the set ${\cal B} \subseteq {\cal B}'.$

The set $A' \subset A$ is defined to be
\[
A' := \bigcup_{y \in {\cal B}} A_{y+H},
\]
i.e.~$A'$ is obtained from $A$ by removing $A_{y+H}$ for each $y \in {\cal B}' \setminus {\cal B}$, or equivalently, ``zeroing out'' $A$ on every coset $y+H$ where $A_{y+H}$ either is not $\epsilon$-quasirandom or has density 
smaller than $\tau$. (Since the algorithm knows $H$ and ${\cal B}$, it is clear from this definition of $A'$ that, as mentioned after the informal theorem statement given earlier, the algorithm can simulate an exact oracle for the set $A'$.)  Turning to $A' + A'$, we have that
\begin{eqnarray}
A' + A' &=& \bigcup_{y,z \in \mathcal{B}} \pbra{A \cap (y+H)} + \pbra{A \cap (z+H)}, \nonumber \\
&\approx& \bigcup_{y,z \in \mathcal{B}} y+z+H, ~\label{eq:appx-oracle}
\end{eqnarray} 
where the last line follows from \Cref{prop:anindya-bogolyubov} (that we informally stated as the first ingredient mentioned above). As above, since the algorithm knows $H$ and ${\cal B}$, it is clear from that the algorithm can simulate an approximate oracle for $A' + A'$.

\subsubsection{Achieving Constant Query Complexity}

The above description essentially gives the high level description of our algorithm, at least at a conceptual level.  However, there is a significant caveat, which arises when we consider the query complexity of the algorithm.  Our goal is to achieve query complexity $O_{\epsilon}(1)$, but explicitly obtaining a description of the subspace $H$ necessarily requires a number of queries that scales at least linearly in $n$; indeed, even explicitly describing a single vector in $H$ requires $\Theta(n)$ bits of information (and thus this many queries).  Similarly, obtaining an explicit description of even a single vector $y \in {\cal B}'$ would be prohibitively expensive using only constantly many queries.  To circumvent these obstacles and achieve constant (rather than linear or worse) query complexity, we need to develop {``implicit''} versions of the procedures described above. 

As an example, we recall that the standard Goldreich-Levin algorithm, given oracle access to any set $A \subseteq \mathbb{F}_2^n$, outputs a list of parity functions $\chi_{\alpha^{(1)}}, \chi_{\alpha^{(2)}},\dots$ such that the Fourier coefficient $|\widehat{A}(\alpha^{(i)})|$ is ``large'' (roughly, at least $\eps$) for each $i$.  However, explicitly outputting the label $\alpha^{(i)}$ of even a single parity would require $n$ bits of information. To avoid this, we slightly modify the standard Goldreich-Levin procedure to show that with $\poly(1/\epsilon)$ queries, we can output {\em oracles} to the parity functions $\chi_{\alpha^{(1)}}, \chi_{\alpha^{(2)}},\dots$. In turn, each such oracle can be computed on any point $x \in \F_2^n$ with just $\poly(1/\epsilon)$ many queries to the set $A$; thus, we have {\em implicit} access to the parity functions $\{\chi_\alpha\}$ rather than explicit descriptions of the parities.  In the language of coding theory, this amounts to an analysis showing that the Goldreich-Levin algorithm can be used to achieve constant-query ``local list correction'' of the Hadamard code.  We view this as essentially folklore \cite{Sudan21}; it is implicit in a number of previous works \cite{sudtrevad01,KS13}, but the closest explicit statements we have been able to find in the literature essentially say that Goldreich-Levin is a constant-query local list decoder (rather than local list corrector) for the Hadamard code. 

With an ``implicit'' version of the Goldreich-Levin algorithm in hand, we show how to carefully use this implicit Goldreich-Levin to obtain an ``implicit'' algorithmic version of Green's regularity lemma.  This implicit version is sufficient to carry out the steps mentioned above with overall constant query complexity. We hope that the implicit (query-efficient) versions of these algorithms may be useful in other settings beyond the current work.

\subsection{Related Work}  

As noted earlier, our sumset size estimation problem has a similar flavor to the work of \cite{KothariNOW14,Neeman2013surface} on testing surface area, but the technical details are entirely different.

We note that for any invertible affine transformation $\Phi: \F_2^n \to \F_2^n$, we have that $\Vol(A+A) = \Vol (\Phi A + \Phi A)$ (but clearly this need not hold for noninvertible affine transformations). Starting with the influential paper of Kaufman and Sudan~\cite{KaufmanSudan:08}, a number of works have studied the testability of affine-invariant properties, see e.g.~\cite{bhattacharyya2015unified, bhattacharyya2013testing, hatami2013estimating, yoshida2014characterization, hatami2016general, bhattacharyya2013guest} These works consider properties that are invariant under {\em all} affine transformations (not just invertible ones), which makes them inapplicable to our setting. However, we note that there are thematic similarities between the approaches in those works and our approach (in particular, the use of the {``structure versus randomness''} paradigm).


\section{Preliminaries} \label{sec:prelims}

In this section, we set notation and briefly recall preliminaries from additive combinatorics and Fourier analysis of Boolean functions. Given arbitrary $A,B\sse\F_2^n$, we define
\[\Vol(A) := \frac{|A|}{2^n}\qquad\text{and}\qquad\Vol_{B}(A) := \frac{\abs{A\cap B}}{|B|}.\]
We will sometimes identify 
a set $A\sse\F_2^n$ with its indicator function $A: \F_2^n \to \zo$, defined as
\[A(x) = \begin{cases}
 1 & x\in A\\ 0 	& x\notin A
 \end{cases}
\] 
for $x\in\F_2^n$. When $A \subseteq x+H$ for some coset $x+H$, we similarly identify $A$ with its indicator function $A: x+H \to \{0,1\}$.
We write $e_i\in\F_2^n$ to denote the vector with a $1$ in the $i^\text{th}$ position and $0$ everywhere else. The function $2\uparrow\uparrow m$ denotes an exponential tower of $2$'s of height $m$ and the function $\log^\ast$ denotes its inverse. 

\subsection{Analysis of Boolean Functions} 
\label{subsec:fourier}

Our notation and terminology follow \cite{odonnell-book}. We will view the vector space of functions $f:\F_2^n\to\R$ as a real inner product space, with inner product $\abra{f,g} := \Ex_{\bx\sim\F_2^n}\sbra{f(\bx)g(\bx)}$. It is easy to see that the collection of \emph{parity functions} $\cbra{\chi_\alpha}_{\alpha\in\F_2^n}$ where $\chi_\alpha(x) := (-1)^{\abra{\alpha, x}} = (-1)^{\sum_{i=1}^n \alpha_i x_i}$ forms an orthonormal basis for this vector space. In particular, every function $f : \F_2^n \to \R$ can be uniquely expressed by its \emph{Fourier transform}, given by
\begin{equation} \label{eq:fourier-expansion}
	f(x) = \sum_{\alpha\in\F_2^n} \wh{f}(\alpha)\chi_\alpha(x).
\end{equation}
The real number $\wh{f}(\alpha)$ is called the \emph{Fourier coefficient of $f$ on $\alpha$}, and the collection of all $2^n$ Fourier coefficients of $f$ is called the \emph{Fourier spectrum} of $f$. 
We recall\ignore{It is easy to see that} Parseval's and Plancherel's formulas: for all $f, g : \F_2^n \to \R$, we have
\begin{equation} \label{eq:parseval-plancharel}
	\abra{f,f} = \sum_{\alpha\in\F_2^n} \wh{f}(\alpha)^2 \qquad\text{and}\qquad \abra{f,g} = \sum_{\alpha\in\F_2^n} \wh{f}(\alpha)\wh{g}(\alpha).
\end{equation}
It follows that $\E[f] = \wh{f}(0)$. Given $f, g : \F_2^n \to \R$, their \emph{convolution} is the function $f\ast g : \F_2^n \to \R$ defined by 
\[f\ast g(x) := \Ex_{\by\sim\F_2^n}\sbra{f(\by)g(x+\by)},\]
which satisfies
\begin{equation} \label{eq:convolution-formula}
	\wh{f\ast g}(\alpha) = \wh{f}(\alpha)\cdot\wh{g}(\alpha).
\end{equation}

\subsection{Subspaces and Functions on Subspaces}
\label{subsubsec:subspaces}

Throughout this subsection, let $f: \F_2^n\to\R$ and let $H\leq\F_2^n$ be a linear subspace of codimension $k$ (so $|H|=2^{n-k}$). We can write 
\begin{equation} \label{eq:H-def}
	H = \cbra{x : \abra{x,\alpha_i} = 0\ \forall\ i\in\{1,\ldots,k\}}
\end{equation}
for some linearly independent collection of vectors $\{\alpha_1,\dots,\alpha_k\}$. 

A coset\ignore{\rnote{The notation was $H_i$ for the coset but that clashed slightly with the dummy index in the next centered equation, and I didn't see a reason to use $H_i$ here so for now I am making this $H'$ instead 
\shivam{Makes sense, thanks!}}} $H'$, which is an affine subspace or, equivalently, a ``translate'' $y+H$ for some $y\in\F_2^n$, can be expressed as a set of the form 
\begin{equation} \nonumber \label{eq:H-coset}
	H' = \cbra{x : \abra{x,\alpha_i} = b_i\ \forall\ i\in\{1,\ldots,k\}}
\end{equation}
for some $b_i \in \F_2$; we will often identify $H'$ with the vector $b := (b_1, \ldots, b_k)$. Note that if $H' = y+H$, then $b_i = \abra{y,\alpha_i}$. 

Any coset of $H$ is affinely isomorphic to a copy of $\F_2^{n-k}$, and this lets us define the Fourier transform of a function $f:\F_2^n\to\R$ restricted to a coset $H'$.  More formally, consider the function $f_{H'} : H'\ignore{\cong \F_2^{n-k}} \to \R$ defined as $f_{H'}(x) = f(x)$. Its Fourier spectrum is indexed by the {$2^{n-k}$ elements of $H$;}\ignore{collection of $2^{n-k}$ vectors $\{\beta \in \F_2^n: \beta\perp \mathrm{span}\cbra{\alpha_1,\ldots,\alpha_k}\}$.} in particular, for each $\beta \in H$ we have 
\begin{equation} \label{eq:coset-fourier-coeff}
	\wh{f_{H'}}(\beta) = \frac{1}{2^{n-k}}\sum_{x\in H'} f(x)\chi_{\beta}(x).
\end{equation}

We can alternatively restrict a function $f:\F_2^n\to\R$ to a coset $H'$, but treat it as a function on $\F_2^n$ that takes value 0 on all points in $\F_2^n \setminus H'$; this viewpoint will be notationally cleaner to work with going forward so we elaborate on it here. We define the function $f\restriction_{H'}: \F_2^n\to\R$ as 
	\begin{equation} \label{eq:restriction-fourier-2}
		f\restriction_{H'}(x) = 
		\begin{cases}
 			f(x) & x\in H'\\
 			0 & \text{otherwise}
		\end{cases}.
	\end{equation}
The Fourier coefficients of $f_{H'}$ and $f\restriction_{H'}$ are related by the following simple fact. 

\begin{fact} \label{fact:annoying-coset-stuff}
	Let {{$f:\F_2^n \rightarrow \mathbb{R}$}}, $H$ be as in \Cref{eq:H-def}, and let $H'$ be a coset of $H$.		{Let ${\cal B}' \subseteq \F_2^n, |{\cal B}'|=2^k$ be a collection of $2^k$ coset representatives for $H$ (so every vector in $\F_2^n$ has a unique representation as $\gamma + \beta$ for some $\gamma \in {\cal B}', \beta \in H$).}
	For any $ \gamma\in{\cal B}', \beta \in H$, we have 
	\[\abs{\wh{f\restriction_{H'}}(\gamma+\beta)} = \frac{1}{2^k}\cdot\abs{\wh{f_{H'}}(\beta)}.\]
\end{fact}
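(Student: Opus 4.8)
The plan is to directly compute the Fourier coefficient $\wh{f\restriction_{H'}}(\gamma+\beta)$ from the definition and relate it to $\wh{f_{H'}}(\beta)$. First I would write
\[
\wh{f\restriction_{H'}}(\gamma+\beta) = \Ex_{\bx \sim \F_2^n}\sbra{f\restriction_{H'}(\bx)\,\chi_{\gamma+\beta}(\bx)} = \frac{1}{2^n}\sum_{x \in H'} f(x)\,\chi_{\gamma+\beta}(x),
\]
using that $f\restriction_{H'}$ vanishes off $H'$. The key observation is that $\chi_{\gamma+\beta}(x) = \chi_\gamma(x)\chi_\beta(x)$, and that on the coset $H'$ the function $\chi_\gamma$ is \emph{constant}: indeed, writing $H' = y+H$, for any $x = y+h \in H'$ with $h \in H$ we have $\chi_\gamma(x) = \chi_\gamma(y)\chi_\gamma(h)$, and since $\gamma \in {\cal B}'$... wait — more carefully, $\chi_\gamma$ need not be constant on $H'$ for arbitrary $\gamma$, but what \emph{is} true is that $\chi_{\gamma+\beta}$ and $\chi_{\gamma'+\beta}$ differ by a character that is trivial on $H$ only when $\gamma - \gamma' \in H^\perp$; the real point is simpler. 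Since $\beta \in H$, the character $\chi_\beta$ restricted to the coset $H'=y+H$ satisfies $\chi_\beta(y+h) = \chi_\beta(y)\chi_\beta(h)$, so it is $\chi_\beta(y)$ times the character $\chi_\beta$ on $H$ (identified with $\F_2^{n-k}$); meanwhile $\chi_\gamma$ restricted to $H'$ is the character associated to the image of $\gamma$ under the quotient map, times a fixed sign. The cleanest route: factor $x = y + h$ with $h$ ranging over $H$, pull out the constant $\chi_{\gamma+\beta}(y)$, and recognize the remaining sum $\frac{1}{2^n}\sum_{h \in H} f(y+h)\chi_{\gamma+\beta}(h)$ as $\frac{1}{2^n}\sum_{h\in H} f(y+h)\chi_\beta(h)$ (since $\chi_\gamma \equiv 1$ on $H$ when $\gamma \perp H$ — but $\gamma \in {\cal B}'$ is a coset representative, not necessarily in $H^\perp$).

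Let me reconsider: the correct handling is that $\gamma + \beta$ with $\gamma$ ranging over coset representatives of $H$ and $\beta \in H$ gives a bijective reparametrization of $\F_2^n$, and what matters is which \emph{coset of $H^\perp$} the index $\gamma+\beta$ lies in. Observe $H^\perp = \mathrm{span}\{\alpha_1,\dots,\alpha_k\}$ has dimension $k$, so ${\cal B}'$ (with $2^k$ elements) can be taken as coset representatives for $H^\perp$ rather than $H$ — and actually \Cref{eq:coset-fourier-coeff} tells us $\wh{f_{H'}}$ is indexed by $\beta \in H$. So the plan is: show that $\chi_{\gamma+\beta}$ restricted to $H'$ depends on $\gamma$ only through an overall sign (namely $\chi_\gamma(y)$ times $\chi_\beta(y)$, both constants on the coset once we fix the representative $y$ of $H'$), then extract
\[
\abs{\wh{f\restriction_{H'}}(\gamma+\beta)} = \frac{1}{2^n}\abs{\sum_{x\in H'} f(x)\chi_\beta(x)} = \frac{2^{n-k}}{2^n}\abs{\wh{f_{H'}}(\beta)} = \frac{1}{2^k}\abs{\wh{f_{H'}}(\beta)},
\]
where the middle equality is exactly \Cref{eq:coset-fourier-coeff}, and the factor $2^{n-k}/2^n = 2^{-k}$ accounts for the different normalizations (expectation over $\F_2^n$ versus the coset). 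Taking absolute values is what lets us discard the unit-modulus sign $\chi_{\gamma+\beta}(y)$ and the sign coming from how $\chi_\gamma$ acts, so the identity is cleanest stated with $|\cdot|$ as in the fact.

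The main (only) obstacle is bookkeeping the characters correctly: being careful that for $\gamma \in {\cal B}'$ and $\beta \in H$, the restriction of $\chi_{\gamma+\beta}$ to the coset $H' = y+H$ equals $\chi_{\gamma+\beta}(y)$ times the character on $H \cong \F_2^{n-k}$ corresponding to $\beta$ under the identification used to define $\wh{f_{H'}}$ — in particular that the $\gamma$-part contributes nothing beyond a global sign because, after fixing $y$, $\chi_\gamma(y+h) = \chi_\gamma(y)$ for all $h$ exactly when $\gamma \in H^\perp$; if ${\cal B}'$ is a set of coset representatives for $H$ (as stated) rather than for $H^\perp$, one first replaces $\gamma$ by its canonical representative modulo $H^\perp$, which changes $\chi_{\gamma+\beta}$ on $\F_2^n$ but not its restriction to any single coset $H'$ up to the sign that is killed by the absolute value. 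Once this identification is pinned down, the computation is a two-line substitution and the normalization constant $1/2^k$ falls out immediately.
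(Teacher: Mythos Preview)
Your approach---expand $\wh{f\restriction_{H'}}(\gamma+\beta)$ from the definition, use that $f\restriction_{H'}$ vanishes off $H'$, and match against \Cref{eq:coset-fourier-coeff} after accounting for the $2^n$ vs.\ $2^{n-k}$ normalizations---is exactly the paper's. The paper just executes it in a convenient coordinate system: it first reduces via an invertible linear map to the case $\alpha_i=e_i$, splits $x=(x_1,x_2)\in\F_2^k\times\F_2^{n-k}$, and (crucially) \emph{chooses} ${\cal B}'=\mathrm{span}\{e_1,\dots,e_k\}$, which in those coordinates is $H^\perp$. With that choice $\chi_\gamma$ is genuinely constant on each coset of $H$, the constant $\chi_\gamma(b)$ factors out, and the identity drops out in two lines.

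The place your write-up stalls is real, and your proposed fix does not close it. For an \emph{arbitrary} system ${\cal B}'$ of coset representatives of $H$, the restriction $\chi_\gamma|_H$ need not be trivial, and the identity as literally stated can fail: with $n=3$, $H=\{x:x_1=0\}$, $H'=H$, $f=\mathbf{1}_{\{000,001\}}$, and ${\cal B}'=\{000,101\}$, one gets $|\wh{f\restriction_{H'}}(101)|=0$ while $\tfrac{1}{2}|\wh{f_{H'}}(000)|=\tfrac14$. Your suggestion to ``replace $\gamma$ by its canonical representative modulo $H^\perp$'' does not help, since shifting by an element of $H^\perp$ leaves the coset $\gamma+H^\perp$ unchanged; what is actually needed is to re-decompose $\gamma+\beta=\delta+\beta'$ with $\delta\in H^\perp$, which yields the correct relation $|\wh{f\restriction_{H'}}(\gamma+\beta)|=\tfrac{1}{2^k}|\wh{f_{H'}}(\beta')|$ but for a \emph{different} index $\beta'\neq\beta$. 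The paper's coordinate reduction sidesteps this by picking ${\cal B}'$ inside $H^\perp$ from the outset, and that special choice (or any choice with ${\cal B}'\subseteq H^\perp$) is all that is needed for the sole application of the Fact, namely the equivalence of quasirandomness for $f_{H'}$ and $f\restriction_{H'}$.
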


\begin{proof}
For ease of notation we first consider the case that $\alpha_i = e_i$. Suppose that 
	$H'$ is given by 
	\[H' = \cbra{x : \abra{x,e_i} = b_i\ \forall\ i\in\{1,\ldots,k\}}\] 
	where we write $b := (b_1,\ldots, b_k)$.  We may take ${\cal B}'$ to be the set of all $2^k$ vectors in $\F_2^n$ whose last $n-k$ coordinates are all 0, and we note that $H=\mathrm{span}\cbra{e_{k+1},\dots,e_n}.$
	
	For $ \gamma\in{\cal B}', \beta \in H$, we have 
	\begin{align*}
		\wh{f\restriction_{H'}}(\gamma + \beta) &= \frac{1}{2^n}\sum_{x\in\F_2^n}f\restriction_{H'}(x)\chi_{\gamma + \beta}(x)\\
		&= \frac{1}{2^n}\sum_{x_1\in\F_2^k}\sum_{x_2\in\F_2^{n-k}}f\restriction_{H'}(x_1, x_2)\chi_\gamma(x_1) \chi_\beta(x_2)\\
		\intertext{where we have abused notation in the last line and viewed $\gamma \in \F_2^k, \beta \in \F_2^{n-k}$.  In turn the above is equal to}
		&= \frac{1}{2^n}\sum_{x_2\in\F_2^{n-k}}f\restriction_{H'}(b, x_2)\chi_{\gamma}(b)\chi_{\beta}(x_2)\\
\intertext{as $(x_1,x_2)\notin H'$ (and hence $f\restriction_{H'}(x_1, x_2)=0$) if $x_1\neq b$, and so}
		&= \frac{\chi_{\gamma}(b)}{2^n} \sum_{x_2\in\F_2^{n-k}} f\restriction_{H'}(b, x_2)\chi_{\beta}(x_2)\\
		&= \frac{\chi_{\gamma}(b)}{2^k}\cdot\frac{1}{2^{n-k}}\sum_{x_2\in\F_2^{n-k}} f_{H'}({b},x_2)\chi_{\beta}(x_2) \\
\intertext{\ignore{where we abuse notation and view $\beta \in \F_2^{n-k}$, which lets us write}which by \Cref{eq:coset-fourier-coeff} gives us} 
		&= \frac{\chi_{\gamma}(b)}{2^k}\cdot\wh{f_{H'}}(\beta).
	\end{align*}
	The result in the general case follows by applying an invertible linear transformation mapping $\alpha_i \mapsto e_i$ (see Exercise~3.1 of \cite{odonnell-book}). 
\end{proof}

\subsection{Parity Decision Trees}

We will only need the notion of a ``nonadaptive'' parity decision tree:

\begin{definition}[nonadaptive parity decision tree] \label{def:parity-decision-tree}
A \emph{nonadaptive parity decision tree} $\calT_f$ is a representation of a function $f:\F_2^n\to\R$. It consists of a rooted binary tree of depth $d$ with $2^d$ leaves, so every root-to-leaf path has length exactly $d$. Each internal node at depth $i$ is is labeled by a vector $\alpha_i \in\F_2^n$ corresponding to the parity function $\chi_{\alpha_i}(\cdot)$, and the vectors $\alpha_1,\dots,\alpha_d \in \F_2^n$ are linearly independent. (Having all nodes at level $i$ be labeled with the same vector $\alpha_i$ is the sense in which the tree is ``nonadaptive.'')
The outgoing edges of each internal node are labeled $0$ and $1$, and the leaves of $\calT_f$ are labeled by functions (which are restrictions of $f$). The \emph{size} of $\calT_f$ is the number of leaf nodes of $\calT_f$. 

In more detail, a root-to-leaf path can be written as $\{(\alpha_i\to b_i)\}$ where we follow the outgoing edge $b_i$ from the internal node $\alpha_i$, with $b_i\in\F_2$. 
On an input $x$, the parity decision tree $\calT_f$ follows the root-to-leaf path $\{(\alpha_i\to \abra{\alpha_i, x})\}$ and outputs the value of the function associated to the leaf at $x$. 
\end{definition}

Note that given $f: \F_2^n\to\R$ and a subspace $H\leq\F_2^n$ of codimension $k$ as in \Cref{eq:H-def}, we can associate a natural parity decision tree $\calT_f$ in which each level-$i$ internal node is labeled by $\alpha_i$ and each leaf node (corresponding to some coset $H'$ of $H$) is labeled by $f\restriction_{H'}$.

\subsection{Quasirandomness and Green's Regularity Lemma}
\label{subsec:additive-combo}


The following definition of \emph{quasirandomnesss} has been well-studied as a notion of pseudorandomness in additive combinatorics; we refer the interested reader to \cite{Chung1992} for more details.  

\begin{definition}[$\eps$-quasirandomness] \label{def:quasirandom}
	We say that $f:\F_2^n\to\R$ is $\eps$-\emph{quasirandom} if 
	\[\sup_{0^n\neq\alpha}\abs{\wh{f}(\alpha)} \leq \eps.\]
\end{definition}

\begin{definition}[$\eps$-quasirandom when restricted to coset] \label{def:quasirandom-coset}
	Let $f:\F_2^n\to\R$, $H\leq\F_2^n$ as in \Cref{eq:H-def}, and let $H'$ be a coset of $H$. We say that $f_{H'}:H'\to\R$ is $\eps$-\emph{quasirandom} if
	\[\sup_{0^n\neq\beta {\in H}\ignore{\perp\mathrm{span}\{\alpha_i\}}}\abs{\wh{f_{H'}}(\beta)} \leq \eps\]
	where $\wh{f_{H'}}(\beta)$ is as defined in \Cref{eq:H-coset}.
\end{definition}

In \Cref{def:quasirandom,def:quasirandom-coset}, the function of interest will often be the indicator of a subset $A\sse\F_2^n$. We next state Green's regularity lemma for Boolean functions, which is analogous to Szemer\'edi's celebrated graph regularity lemma \cite{Szemeredi:78}. 

\begin{proposition}[Green's regularity lemma in $\F_2^n$] \label{prop:green-reg-lemma}
	Let $A\sse\F_2^n$ and $\eps > 0$. There exists a subspace $H\leq\F_2^n$ with cosets $\{H_i\}$ such that
	\begin{enumerate}
		\item the codimension of $H$ is at most $2\uparrow\uparrow \frac{1}{\gamma\eps^2}$; and 
		\item for all but $\gamma$-fraction of cosets of $H$, the function $A_{H_i} : H_i \to \zo$ is $\eps$-quasirandom.
	\end{enumerate}
\end{proposition}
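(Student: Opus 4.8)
The plan is to prove this by the standard \emph{energy-increment} argument --- iterative refinement of a subspace until its cosets are mostly quasirandom. Fix $A\sse\F_2^n$ and $\eps,\gamma>0$ throughout. For a subspace $H\le\F_2^n$ of codimension $k$ with cosets $H_1,\dots,H_{2^k}$, write $H^\perp:=\mathrm{span}\{\alpha_1,\dots,\alpha_k\}$ and define the potential
\[
\Phi(H)\ :=\ \frac{1}{2^k}\sum_{i=1}^{2^k}\Vol_{H_i}(A)^2 .
\]
Equivalently, if $f$ denotes the function obtained from $A$ by replacing its value on each coset $H_i$ with the constant $\Vol_{H_i}(A)$, then $f=\sum_{\alpha\in H^\perp}\wh{A}(\alpha)\chi_\alpha$, so $\Phi(H)=\|f\|_2^2=\sum_{\alpha\in H^\perp}\wh{A}(\alpha)^2$ by Parseval's identity. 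In particular $0\le\Phi(H)\le\|A\|_2^2=\Vol(A)\le 1$, and (by the Pythagorean identity) $\Phi$ does not decrease when we pass from $H$ to a subspace of $H$.

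The procedure starts from $H=\F_2^n$ (codimension $0$) and repeats the following step. If at most a $\gamma$-fraction of the cosets of the current $H$ fail to be $\eps$-quasirandom, halt and return $H$; this is exactly item~2. Otherwise call a coset $H_i$ \emph{bad} if there is some $0^n\ne\beta^{(i)}\in H$ with $\abs{\wh{A_{H_i}}(\beta^{(i)})}>\eps$, fix one such witness $\beta^{(i)}$ for each bad coset, and replace $H$ by
\[
H'\ :=\ H\ \cap\ \bigcap_{\text{bad }i}\,\cbra{\,x : \langle x,\beta^{(i)}\rangle=0\,}.
\]
(By first applying an invertible linear change of variables we may assume $\alpha_1,\dots,\alpha_k$ equal $e_1,\dots,e_k$, so that $H\cap H^\perp=\cbra{0^n}$; this makes every nonzero $\beta^{(i)}\in H$ automatically independent of $\alpha_1,\dots,\alpha_k$ and keeps the coset-Fourier bookkeeping below clean, exactly as in the proof of \Cref{fact:annoying-coset-stuff}.)

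Two estimates drive the argument. First, \emph{codimension growth}: since $H$ has only $2^k$ cosets, at most $2^k$ new constraints are added, so $\mathrm{codim}(H')\le k+2^k$. Second, and this is the crux, the \emph{energy increment} $\Phi(H')>\Phi(H)+\gamma\eps^2$. To see it, let $g$ be the function obtained from $A$ by averaging over each coset of $H'$; as the $H'$-cosets refine the $H$-cosets, the Pythagorean identity for nested partitions gives
\[
\Phi(H')-\Phi(H)\ =\ \|g-f\|_2^2\ =\ \frac{1}{2^k}\sum_{i=1}^{2^k}\,\mathbb{E}_{x\sim H_i}\big[(g(x)-f(x))^2\big] .
\]
On each coset $H_i$ the function $f$ is the constant $\Vol_{H_i}(A)$, so by Parseval on $H_i$ the $i$-th term equals $\sum_{0^n\ne\beta\in H}\wh{g_{H_i}}(\beta)^2$, where $g_{H_i}$ denotes the restriction of $g$ to $H_i$; and because $\beta^{(i)}$ is one of the new constraints, the $H'$-cosets carving up $H_i$ all lie in level sets of $\chi_{\beta^{(i)}}$, so $\wh{g_{H_i}}(\beta^{(i)})=\wh{A_{H_i}}(\beta^{(i)})$. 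Hence the $i$-th term is at least $\wh{A_{H_i}}(\beta^{(i)})^2>\eps^2$ whenever $H_i$ is bad, and since more than a $\gamma$-fraction of the cosets are bad, averaging yields $\Phi(H')-\Phi(H)>\gamma\eps^2$.

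To conclude, combine the two estimates. Since $\Phi$ stays in $[0,1]$ and increases by more than $\gamma\eps^2$ at each refinement, the procedure performs fewer than $1/(\gamma\eps^2)$ refinements before returning a subspace satisfying item~2. Writing $k_t$ for the codimension after $t$ refinements, we have $k_0=0$ and $k_{t+1}\le k_t+2^{k_t}$; unwinding this recursion over $O(1/(\gamma\eps^2))$ steps bounds the final codimension by a tower of $2$'s of height $O(1/(\gamma\eps^2))$, which gives the bound $2\uparrow\uparrow\tfrac{1}{\gamma\eps^2}$ of item~1 once the absolute constant is optimized (e.g.\ by using the sharper containment $\Phi\in[\Vol(A)^2,\Vol(A)]$ to bound the number of refinements). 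I expect the energy-increment step to be the main obstacle: one must check that refining \emph{every} bad coset by \emph{its own} witness character simultaneously still buys the full $\gamma\eps^2$ gain in the \emph{global} potential $\Phi$ (rather than merely a $\gamma\eps^2/2^k$ gain, which would be useless), and this is exactly what the coset-by-coset Pythagorean decomposition above delivers. The one other point needing (routine) care is the $\F_2$-specific identity $\wh{g_{H_i}}(\beta^{(i)})=\wh{A_{H_i}}(\beta^{(i)})$ together with the linear independence of the newly added constraint vectors, both cleanly handled by the reduction to ``standard position'' above.
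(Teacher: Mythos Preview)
Your proof is correct and follows the standard energy-increment argument that the paper itself relies on: the paper does not give a standalone proof of this proposition but defers to its algorithmic version (Proposition~\ref{prop:construct-dt-correctness}), which uses the identical potential $\potfunc[A,H]=\Phi(H)$ and invokes Green's Lemma~2.2 for the increment step. The only cosmetic difference is that the paper's algorithm adds one witness parity at a time (with $\gamma=\eps$, yielding an $\eps^3/4$ increment per round), whereas you refine by all witnesses simultaneously to get the $\gamma\eps^2$ gain---both are standard variants of the same argument.
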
 

In \Cref{sec:implicit-stuff}, we will see the proof of Green's regularity lemma (in the course of providing a constructive and highly query-efficient version of the lemma).\ignore{ which is more, the proof of which will closely follow Green's original proof of {\red{\Cref{prop:green-reg-lemma}}} itself \cite{Green:05}.}

\subsection{The Goldreich--Levin Theorem}
\label{subsec:prelims-gl}

Given \emph{query access} to a function $f: \F_2^n\to \zo$, the Goldreich--Levin algorithm \cite{goldreich-levin} allows us to find all linear (parity) functions that are well-correlated with $f$ (equivalently, it allows us to find all the ``significant'' Fourier coefficients of $f$). More formally, we have the following result. 

\begin{proposition}[Goldreich--Levin algorithm] \label{thm:km}
	Let $A\sse\F_2^n$ be arbitrary and let $\theta, \delta > 0$ be fixed. There is an algorithm $\GL\pbra{A, \theta, \delta}$ that, given query access to $A : \F_2^n\to\zo$, outputs a subset $\calS \sse \F_2^n$ of size $O\pbra{1/\theta^2}$ such that with probability at least $1 - \delta$, we have
	\begin{itemize}
		\item if $\alpha \in \calS$, then $\abs{\wh{A}(\alpha)} \geq \frac{\theta}{2}$; and 
		\item if $\abs{\wh{A}(\alpha)} \geq \theta$, then $\alpha \in \calS$. 
	\end{itemize}
	Furthermore, $\GL\pbra{A, \theta, \delta}$ runs in $\poly\pbra{n, \frac{1}{\theta}, \log\frac{1}{\delta}}$ time and makes $\poly\pbra{n, \frac{1}{\theta}, \log\frac{1}{\delta}}$ queries to $A$.
\end{proposition}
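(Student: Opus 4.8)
The plan is to establish this via the standard Kushilevitz--Mansour--style ``bucketing'' implementation of the Goldreich--Levin algorithm. The starting point is that, since $A$ is $\zo$-valued, Parseval's formula gives $\sum_{\alpha\in\F_2^n}\wh{A}(\alpha)^2 = \E[A^2] = \E[A] \le 1$; in particular at most $1/\theta^2$ vectors $\alpha$ satisfy $|\wh A(\alpha)| \ge \theta$, which is the source of the claimed bound $|\calS| = O(1/\theta^2)$. The algorithm maintains, for $k = 0, 1, \dots, n$, a family of \emph{buckets} at level $k$; a level-$k$ bucket is indexed by a prefix $\beta \in \F_2^k$ and morally stands for the set of all $\alpha\in\F_2^n$ whose first $k$ coordinates agree with $\beta$. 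Its \emph{weight} is $W_\beta := \sum_{\alpha : \alpha_{1\dots k} = \beta}\wh A(\alpha)^2$, and the weights of the buckets at a fixed level sum to at most $1$.

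The first key step is an identity that lets us estimate $W_\beta$ cheaply. Writing $x = (z,w)$ with $z \in \F_2^k$, $w\in\F_2^{n-k}$, and applying Parseval to the function $w \mapsto \Ex_{z\sim\F_2^k}[A(z,w)\chi_\beta(z)]$ (whose Fourier coefficients over $\F_2^{n-k}$ are exactly the numbers $\wh A(\beta,\cdot)$), one obtains
\[
W_\beta \;=\; \Ex_{w\sim\F_2^{n-k},\ z,z'\sim\F_2^{k}}\sbra{A(z,w)\,A(z',w)\,\chi_\beta(z+z')}.
\]
Each sample $(w,z,z')$ costs two queries to $A$ and yields an unbiased estimator of $W_\beta$ taking values in $[-1,1]$, so by a Hoeffding bound $O\big(\eta^{-2}\log(1/\delta')\big)$ samples estimate $W_\beta$ to additive error $\pm\eta$ except with probability $\delta'$.

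The algorithm then proceeds greedily: begin with the single level-$0$ bucket; at each level split every surviving bucket $\beta$ into its children $\beta0$ and $\beta1$, estimate each child's weight to additive accuracy $\pm\theta^2/4$, and retain a child exactly when its estimate is at least $\theta^2/2$; after level $n$, output the set $\calS$ of the (now length-$n$) surviving prefixes. Conditioning on the event that \emph{all} weight estimates land within $\pm\theta^2/4$ of the truth, correctness follows from three observations. \emph{Completeness:} if $|\wh A(\alpha)| \ge \theta$ then every prefix of $\alpha$ has weight $\ge\theta^2$, hence estimated weight $\ge 3\theta^2/4 > \theta^2/2$, so it survives and $\alpha\in\calS$. \emph{Soundness:} if $\alpha\in\calS$ then the singleton bucket $\alpha$ had estimate $\ge\theta^2/2$, hence $\wh A(\alpha)^2 = W_\alpha \ge \theta^2/4$, i.e.\ $|\wh A(\alpha)| \ge \theta/2$. \emph{Efficiency:} at each level the surviving buckets correspond to disjoint sets of Fourier coefficients each of squared weight $\ge \theta^2/4$, and the total is $\le 1$, so at most $4/\theta^2$ buckets survive per level (in particular $|\calS| = O(1/\theta^2)$), and hence over all $n$ levels only $O(n/\theta^2)$ buckets are ever processed.

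Finally, to make the ``all estimates accurate'' event hold with probability $\ge 1-\delta$, we run each of the $O(n/\theta^2)$ weight estimations with failure probability $\delta' = \delta\theta^2/(Cn)$ for a suitable absolute constant $C$ and union bound; with $\eta = \theta^2/4$ this is $\poly(1/\theta)\cdot\log(n/(\theta\delta))$ queries per bucket, for a total query complexity (and, adding the $O(n)$ cost of handling each $n$-bit query string, time complexity) of $\poly(n,1/\theta,\log(1/\delta))$. The one genuinely delicate point is the bookkeeping of thresholds: the retention threshold $\theta^2/2$ and the estimation accuracy $\theta^2/4$ must be set so that heavy coefficients are never pruned (completeness) while the surviving-bucket count stays $O(1/\theta^2)$ at every level (efficiency) --- the two requirements push in opposite directions, so the constants have to be chosen with a little care.
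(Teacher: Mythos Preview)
The paper does not give its own proof of this proposition: it is stated in the preliminaries (\Cref{subsec:prelims-gl}) as a known result, with a citation to~\cite{goldreich-levin}, and is used as a black box thereafter. So there is nothing in the paper to compare your argument against for this particular statement.

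That said, your proof is correct and is the standard Kushilevitz--Mansour bucketing argument; the weight identity, the Hoeffding estimation, the completeness/soundness/efficiency bookkeeping, and the union bound are all in order. One small remark: when the paper later (\Cref{subsec:implicit-gl}, \Cref{alg:implicit-GL}) builds the \emph{implicit} Goldreich--Levin routine, it does not use the bucketing approach you describe but rather the ``Rackoff analysis'' route---guess the parity values on a small set of random points, use pairwise-independent shifts and majority voting to decode---because that approach lends itself naturally to producing oracles for $\chi_\alpha$ rather than explicit descriptions of $\alpha$. Your bucketing proof is perfectly fine for the explicit statement here, but would not straightforwardly yield the constant-query implicit version the paper needs later.
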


\subsection{Oracles and Oracle Machines}
\label{subsec:oracles}

As stated in the introduction, the outputs of our algorithmic procedures---\Cref{alg:constructive-regularity-lemma,alg:simulate-sumset}---will be \emph{oracles} to the indicator functions of specific subsets of $\F_2^n$. We first recall the definition of a probabilistic oracle machine:

\begin{definition} \label{def:prob-oracle-machine}
Let $f: \F_2^n \to \zo$.  
A randomized algorithm ${\cal O}$ with black-box query access to $f$, denoted ${\cal O}^f$,  is said to be a \emph{probabilistic oracle machine for $g: \F_2^n \to \zo$} if for any input $x \in \F_2^n$, the algorithm ${\cal O}^f$ outputs a bit ${\cal O}^f(x)$ that satisfies
 \[
 \Pr [\mathcal{O}^f(x)  = g(x)] \ge 2/3,
 \]
where the probability is taken over the internal coin tosses of $\mathcal{O}^f$. The \emph{query complexity} of the machine is the number of oracle calls made by $\mathcal{O}$ to $f$ and the \emph{running time} of the machine is the number of time steps it takes in the worst case (counting each oracle call as a single time step).
\end{definition}

Of course, the 2/3 in the above definition can be upgraded to $1-\tau$ at a cost of increasing the query complexity by a factor of $O(\log (1/\tau))$. We next define  what it means for an algorithm to ``output an (approximate) oracle'' for a function.

\begin{definition}
	Let $f,g$ be two functions $f,g: \F_2^n\to \{0,1\}$. An algorithm $\calA$ with query access to $f$, denoted by $\calA^f$, is said to \emph{output a $\pbra{\delta, q, T}$-oracle $\calO^f_g$ for the function $g$} if it outputs a representation of a {probabilistic} oracle machine $\calO^f_g: \F_2^n\to\R$ for which the following hold:
	\begin{enumerate}
		\item We have $\dist(\calO^f_g, g) \leq \delta$ (i.e. $\Pr_{\bx \sim \F_2^n}[\calO^f_g(\bx) \neq g(\bx)] \leq \delta$);
		\item The query complexity of ${\cal O}^f_g$ is at most $q$ and the running time of ${\cal O}^f_g$ is at most $T$.
	\end{enumerate}
	If $\delta = 0$, then we say that $\calO^f_g$ is an \emph{exact oracle} for $g$. 
\end{definition} 

\section{A Query-Inefficient Version of the Main Result}
\label{sec:algo-inefficient}

\newcommand{\Treg}{\calT_{\mathrm{regular}}}

In this section, we prove a query-inefficient ``non-implicit'' version of our main result, which has a polynomial query complexity dependence on the ambient dimension $n$. 
In particular, we will prove the following theorem. 

\begin{theorem} [Main result, query-inefficient version] \label{thm:main-non-implicit}
	Let $A\sse\F_2^n$ be an arbitrary subset, and let $\eps, \tau > 0$. Given query access to $A$, there exists an algorithm that makes $\poly\pbra{n, 2\uparrow\uparrow\frac{8}{\eps^3}, \frac{1}{\tau}}$ queries to $A$ and does a $\poly\pbra{n, 2\uparrow\uparrow\frac{8}{\eps^3}, \frac{1}{\tau}}$ time computation and outputs with probability at least $9/10$:
	\begin{enumerate}
		\item A $\pbra{0, 1, O(n)}$-oracle $\calO^A_{A'}$ to the indicator function of $A'\sse A$ where $\Vol\pbra{A\setminus A'} \leq \eps + \tau$; and
		\item A $\pbra{O\pbra{\eps^2/\tau^4},0, O(n)}$-oracle $\calO^A_{A'+A'}$ to the indicator function of the sumset $A'+A'$.
	\end{enumerate}
\end{theorem}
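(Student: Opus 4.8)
The plan is to assemble the algorithm from the two ingredients advertised in the technical overview, running the (non-implicit) versions of the subroutines so that the only concern is correctness plus a polynomial-in-$n$ query bound. First I would run a constructive version of Green's regularity lemma (\Cref{prop:green-reg-lemma}) with regularity parameter $\eps$ and exception parameter $\gamma := \tau$, obtaining an explicit basis $\alpha_1,\dots,\alpha_k$ for a subspace $H$ of codimension $k \le 2\uparrow\uparrow\frac{1}{\tau\eps^2}$ (the paper's stated bound $2\uparrow\uparrow\frac{8}{\eps^3}$ presumably comes from a specific choice tying $\tau$ to $\eps$, or simply absorbing constants; I would not fuss over the exact tower height). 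The constructive proof is iterative: maintain the current subspace, estimate (via sampling) whether a $\gamma$-fraction of cosets are $\eps$-quasirandom, and if not, use \Cref{thm:km} (Goldreich--Levin) on the restriction $A_{H'}$ of a violating coset to find a new significant direction $\alpha$ to adjoin. Since each Goldreich--Levin call runs on a coset, I need the mild adaptation of \Cref{thm:km} to cosets (identify $H'$ with $\F_2^{n-k}$ via an affine iso, using \Cref{fact:annoying-coset-stuff} to relate coset-Fourier coefficients to ambient ones); this costs $\poly(n, 1/\eps, \dots)$ queries per iteration, and the number of iterations is bounded by the tower because the squared $\ell_2$-mass of $A$ captured by the span is monotonically increasing by $\Omega(\eps^2)$ at each step (an energy-increment argument).

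Next, with $H$ in hand, I would identify the ``good'' cosets. For each coset $H'$ (indexed by $b \in \F_2^k$) I test two things by random sampling within $H'$: (ii) whether $\Vol_{H'}(A) \ge \tau$ — wait, I should use a threshold like $\tau$ for the density cutoff and a separate role for $\eps$; I would set the density threshold to be something like $\sqrt{\eps}$ times a constant so that \Cref{prop:anindya-bogolyubov} applies — and (i) whether $A_{H'}$ is $\eps$-quasirandom, again via Goldreich--Levin on the coset. Define $\calB \subseteq \F_2^k$ to be the set of $b$ passing both tests, and set $A' := \bigcup_{b \in \calB} A_{H'_b}$. The points of $A$ discarded come from two sources: cosets that are non-quasirandom (at most a $\gamma = \tau$ fraction of all cosets, hence at most $\tau$ in volume), and cosets of density below threshold (contributing at most the threshold times $1$ in volume, i.e.\ $O(\sqrt\eps)$, which after rescaling parameters is $\le \eps$); combined this gives $\Vol(A \setminus A') \le \eps + \tau$ as required. (Care is needed because sampling tests are approximate: I would run them with enough samples that, with high probability, every coset that is ``comfortably'' good is kept and every coset that is ``comfortably'' bad is dropped, and borderline cosets can go either way without affecting the volume bound by more than a constant factor — this is a standard union-bound-over-$2^k$ argument, and $2^k$ is a constant depending only on $\eps,\tau$.)

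For the two output oracles: the oracle $\calO^A_{A'}$ is exact and makes $1$ query — on input $x$, compute $b = (\langle x,\alpha_1\rangle,\dots,\langle x,\alpha_k\rangle)$, check $b \in \calB$ (a table lookup, no queries), and if so return $A(x)$, else return $0$; the only subtlety is that $\calB$ was determined by a randomized preprocessing phase, but once fixed it is hard-wired into the oracle's description, so the oracle itself is deterministic given that description, hence $(0,1,O(n))$. The oracle $\calO^A_{A'+A'}$ implements the approximation in \eqref{eq:appx-oracle}: on input $z$, compute $c = (\langle z,\alpha_1\rangle,\dots,\langle z,\alpha_k\rangle)$ and output $1$ iff $c = b + b'$ for some $b,b' \in \calB$ (again a table lookup). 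Its correctness rests on \Cref{prop:anindya-bogolyubov}: for $b,b' \in \calB$ both $A_{H'_b}, A_{H'_{b'}}$ are $\eps$-quasirandom and have density $\gg \sqrt\eps$, so $A_{H'_b} + A_{H'_{b'}}$ fills all but an $O(\eps^2/\tau^4)$-fraction (per the claimed parameter) of the coset $H'_{b+b'}$; summing the error over the at-most-$2^k$ coset-pairs that map to a given $H'_{b+b'}$, and then over cosets, the symmetric difference between $A'+A'$ and $\bigcup_{b,b'\in\calB} H'_{b+b'}$ has volume $O(\eps^2/\tau^4)$. This oracle makes $0$ queries (it only does linear-algebra on $z$) and runs in time $O(n)$ for the inner products plus $\poly(2^k)$ for the lookup. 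The main obstacle I anticipate is bookkeeping the interacting roles of $\eps$ (quasirandomness), the density threshold, and $\gamma=\tau$ (exceptional cosets) so that the two volume/error guarantees come out cleanly as $\eps+\tau$ and $O(\eps^2/\tau^4)$ respectively — in particular making sure \Cref{prop:anindya-bogolyubov} is invoked with a density that genuinely dominates $\sqrt{\eps}$ after all parameter substitutions, and handling the approximate nature of the sampling-based coset tests via a union bound over the constantly-many cosets.
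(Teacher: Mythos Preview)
Your approach is essentially the paper's: run a constructive Green regularity lemma via Goldreich--Levin to build a parity decision tree, prune non-quasirandom and low-density leaves to define $A'$, output $\calO^A_{A'}(x) = \Treg(x)\cdot A(x)$, and for $A'+A'$ output the indicator of $\calB+\calB$ on the coset space, justified by \Cref{prop:anindya-bogolyubov}. Two bookkeeping points are off. First, the paper takes the exception parameter $\gamma := \eps$ (not $\tau$) in the regularity lemma and uses $\tau$ purely as the density threshold; that is what yields $\Vol(A\setminus A') \le \eps + \tau$ and the tower height depending on $\eps$ alone---your $\sqrt{\eps}$ density threshold and $\gamma=\tau$ choice would scramble the stated bounds. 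Second, you do not sum the per-pair error over the $2^k$ pairs mapping to a given output coset: since $(A'+A')\cap H'_c$ is a \emph{union} of the sets $A_{H'_b}+A_{H'_{b'}}$ over pairs with $b+b'=c$, a single pair already fills all but an $O(\eps^2/\tau^4)$ fraction of $H'_c$, and that bound carries through directly (summing would insert a tower-sized factor you do not want).
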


In \Cref{sec:implicit-stuff}, we will present an ``implicit'' version of \Cref{thm:main-non-implicit} that makes only $O_\eps(1)$ queries, independent of the ambient dimension $n$, and thereby prove our main result.

We start by recording a corollary of Green's regularity lemma in \Cref{subsec:green-decomp}, which (informally), given an arbitrary set $A\sse\F_2^n$, establishes the existence of a ``structured'' set $A'\sse A$ capturing ``almost all'' of $A$. \Cref{subsec:algorithmic-green-decomp} then presents a procedure---\ConstructDT---that constructs an \emph{exact} oracle to this structured set $A'$, giving item (1) of the above theorem. In \Cref{subsec:approx-sumset-simulation}, we present a procedure---\Sumset---that constructs an \emph{approximate} oracle to the sumset $A'+A'$, giving item (2).

\subsection{Partitioning Arbitrary Sets into Dense Quasirandom Cosets}
\label{subsec:green-decomp}

Green's regularity lemma in $\F_2^n$ says that given an arbitrary set $A\sse\F_2^n$ and an error parameter $\eps > 0$, we can partition $\F_2^n$ into $O_\eps(1)$ (independent of $n$) many sets such that $A$ is ``random-like'' on almost all of these sets. Moreover, all these sets have a convenient structure: they are cosets of a common subspace of constant codimension.

We will use the following easy consequence of Green's lemma:

\begin{proposition} \label{prop:green-corollary}
	Given $A\sse\F_2^n$ and $\eps, \tau > 0$, there exists a subspace $H\leq\F_2^n$ of codimension at most $2\uparrow\uparrow\frac{1}{\eps^3}$ and a set $A'\sse A$ such that 
	\begin{enumerate}
		\item $\Vol\pbra{A\setminus A'} \leq \eps + \tau$;
		\item For any coset $H_i$, either $\Vol_{H_i}(A') = 0$ or $\Vol_{H_i}(A') \geq \tau$; and
		\item $A'_{H_i}$ is $\eps$-quasirandom for all cosets $H_i$.
	\end{enumerate}
\end{proposition}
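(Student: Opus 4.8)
The plan is to apply Green's regularity lemma (\Cref{prop:green-reg-lemma}) to $A$ with appropriately chosen internal parameters, and then ``clean up'' the resulting coset partition by discarding two kinds of bad cosets: those on which $A$ fails to be quasirandom, and those on which $A$ is too sparse. First I would invoke \Cref{prop:green-reg-lemma} on $A$ with quasirandomness parameter $\eps$ and exceptional-fraction parameter $\gamma$, where I set $\gamma := \eps$ (this is the source of the $2\uparrow\uparrow \frac 1 {\eps^3}$ bound, since $\frac{1}{\gamma \eps^2} = \frac{1}{\eps^3}$). This yields a subspace $H \leq \F_2^n$ of codimension at most $2\uparrow\uparrow\frac{1}{\eps^3}$, together with its cosets $\{H_i\}$, such that all but an $\eps$-fraction of the cosets $H_i$ have $A_{H_i}$ being $\eps$-quasirandom. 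Call a coset $H_i$ \emph{irregular} if $A_{H_i}$ is not $\eps$-quasirandom, and \emph{sparse} if $\Vol_{H_i}(A) < \tau$ (equivalently $\Vol_{H_i}(A) \in (0,\tau)$, or just $\Vol_{H_i}(A) < \tau$ including the zero case). Define
\[
A' := \bigcup_{H_i \text{ not irregular and not sparse}} A \cap H_i,
\]
i.e.\ $A'$ is obtained from $A$ by zeroing out $A$ on every irregular coset and every sparse coset.

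Next I would verify the three conclusions. Item (2) is immediate from the construction: on any coset $H_i$, either $H_i$ was discarded (so $A' \cap H_i = \emptyset$ and $\Vol_{H_i}(A') = 0$), or it was kept, in which case $A' \cap H_i = A \cap H_i$ and $\Vol_{H_i}(A') = \Vol_{H_i}(A) \geq \tau$. Item (3) requires a small observation: for a kept coset $H_i$, $A'_{H_i} = A_{H_i}$, which is $\eps$-quasirandom by the choice to keep only non-irregular cosets; for a discarded coset, $A'_{H_i}$ is the identically-zero function, all of whose Fourier coefficients on nonzero frequencies vanish, hence it is trivially $\eps$-quasirandom. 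So item (3) holds for \emph{all} cosets. For item (1), $A \setminus A'$ is exactly the union of $A \cap H_i$ over the discarded cosets. The irregular cosets number at most an $\eps$-fraction of all $2^k$ cosets, so they contribute at most $\eps \cdot 2^{n-k} \cdot 2^k / 2^n = \eps$ to $\Vol(A \setminus A')$ (bounding $|A \cap H_i| \leq |H_i| = 2^{n-k}$ crudely). The sparse cosets each satisfy $|A \cap H_i| < \tau \cdot 2^{n-k}$, and there are at most $2^k$ of them, so they contribute at most $\tau \cdot 2^{n-k} \cdot 2^k / 2^n = \tau$ to $\Vol(A \setminus A')$. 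Summing, $\Vol(A \setminus A') \leq \eps + \tau$, which is item (1). (One should be slightly careful about double-counting cosets that are both irregular and sparse, but this only helps, since the two bounds are on disjoint-or-overlapping sets and the union bound remains valid.)

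I do not expect any serious obstacle here; the statement is genuinely a routine corollary of \Cref{prop:green-reg-lemma}. The only point demanding a little care is the arithmetic of the codimension bound---one must set $\gamma = \eps$ (rather than, say, some other function of $\eps$) to land exactly at $2\uparrow\uparrow\frac{1}{\eps^3}$---and the bookkeeping in item (1) to make sure the two types of discarded cosets are accounted for with a clean union bound rather than any subtler argument. The crude bound $|A \cap H_i| \leq |H_i|$ on irregular cosets is what makes the $\eps$ term come out cleanly; a sharper bound is unnecessary.
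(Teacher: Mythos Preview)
Your proposal is correct and follows essentially the same approach as the paper's proof: apply \Cref{prop:green-reg-lemma} with $\gamma=\eps$, then zero out $A$ on the non-quasirandom cosets and on the sparse cosets, and verify the three conclusions (noting that the empty set is trivially $\eps$-quasirandom). If anything, your verification of item~(1) is more explicit than the paper's, which simply asserts that $\Vol(A\setminus A') \leq \eps+\tau$ is clear.
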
 

\begin{proof}
	Let $H\leq \F_2^n$ be the subspace of of codimension at most $2 \uparrow\uparrow \frac{1}{\eps^3}$ guaranteed to exist by \Cref{prop:green-reg-lemma}, and let $\cbra{H_1, \ldots, H_M}$ be an enumeration of the cosets of $H$ where $M = 2^n\cdot\abs{H}^{-1}$. We know from \Cref{prop:green-reg-lemma} that for all but $\eps$-fraction of $\cbra{H_i}$, the function $A_{H_i} : H_i \to \zo$ is $\eps$-quasirandom. 
	
	Define disjoint subsets $A'_1,\dots,A'_M$, where each $A'_i \subseteq A \cap H_i$, as follows:
\begin{enumerate}
	\item If $A_{H_i}$ is not $\eps$-quasirandom, then $A'_i = \emptyset$; 
	\item If $\Vol_{H_i}(A) \leq \tau$, set $A'_i = \emptyset$;
	\item Otherwise, set $A'_i = A \cap H_i$. 
\end{enumerate}
We now define $A' \sse \F_2^n$ as
\begin{equation} \label{eq:A-prime-decomposition}
	A' := \bigsqcup_{i=1}^M A'_i.
\end{equation}
We clearly have $\Vol(A\setminus A') \leq \eps + \tau$ and that $A_{H_i}$ is $\eps$-quasirandom for all $i \in [M]$. (Note that ${\emptyset}$ is trivially $\eps$-quasirandom.)
\end{proof}

Informally, \Cref{prop:green-corollary} modifies $A$ to obtain a structured set $A'\sse A$ that contains ``most'' of $A$ and has either empty or ``large'' intersection with all of the cosets guaranteed to exists by Green's regularity lemma. Furthermore, $A'$ is ``random-like'' on \emph{all}---as opposed to \emph{almost all}---of these cosets. 

\subsection{A Constructive Regularity Lemma via the Goldreich--Levin Theorem}
\label{subsec:algorithmic-green-decomp}

In this section, we make \Cref{prop:construct-dt-correctness} constructive via the Goldreich--Levin algorithm. The procedure \ConstructDT\ presented in \Cref{alg:constructive-regularity-lemma} closely follows the structure of Green's original proof of the regularity lemma itself \cite{Green:05}.

\begin{algorithm}[!htp] 
\caption{A constructive regularity lemma in $\F_2^n$} 
\label{alg:constructive-regularity-lemma}

\textbf{Input:} Query access to $A\sse\F_2^n$, quasirandomness parameter $\eps$, density threshold $\tau$
\vspace{0.5em}

\noindent\textbf{Output:} An exact oracle $\calO_{A'}^A$ and a parity decision tree $\Treg$ with $A'$ as in \Cref{prop:green-corollary}

\

\ConstructDT$\pbra{A,\eps,\tau}$:

\begin{enumerate}[rightmargin=1cm]
	\item 
	Initialize the decision tree $\Treg$ to contain no internal nodes and one leaf labelled by $A: \F_2^n \to \zo$. Define 
	\[\delta := \pbra{2\uparrow\uparrow\frac{8}{\eps^3}}^{-1}\cdot\frac{1}{30}\]
	\item At each stage of growing $\Treg$,  do the following: 
	\begin{enumerate}
		\item Let $\cbra{H_1, \ldots, H_M}$ denote the cosets corresponding to the leaves of the decision tree at the current stage. The $i^\text{th}$ leaf node is labelled by the function $A_{H_i}$.  
		\item For each coset $H_i$,  call 
			\[\calS_{i} \leftarrow \GL\pbra{A\restriction_{H_i} , {\eps/M}, \delta}.\]
		\item For each non-empty $\calS_i$, for each $\alpha\in\calS_i$, estimate $\abs{\wh{A_{H_i}}(\alpha)}$ up to additive error $\pm\eps/4$ with confidence $\delta$. If the estimate is less than $3\eps/4$, then remove $\alpha$ from $\calS_i$. 
		\item If $\calS_{i} = \emptyset$ for at least $(1-\eps)$-fraction of the $\cbra{\calS_1, \ldots, \calS_M}$, go to Step 3.
		\item Let the collection of labels of all internal nodes be $\calL$. For each non-empty $S_i$:
			\begin{enumerate}
				\item Choose $\alpha \leftarrow \calS_i$. Check if the collection $\calL \cup \{\alpha\}$ is linearly independent. 
				\item If so, then add $\alpha$ to $\calL$ and split all nodes at the current stage on $\alpha$.$^\dagger$
			\end{enumerate}
		\item Repeat Step 2. 
	\end{enumerate}
	\item For each leaf node---say, corresponding to the coset $H_i$---estimate $\wh{\Theta}_{i} := \Vol_{H_i}(A)$ up to an additive error of $\pm\tau/4$ with confidence $\delta$.
	\begin{enumerate}
		\item If $\wh{\Theta}_i \geq 3\tau/4$, set the function associated to the leaf node to be the identically-$1$ function.
		\item Else set it to be the identically-$0$ function. 
	\end{enumerate}
	\item Define the oracle $\calO_{A'}^A$ to be the function 
	\[\calO_{A'}^A(x) = \Treg(x)\cdot A(x).\]
\end{enumerate}

\

\hrule 

\

\quad {\footnotesize $\dagger$ By ``splitting'' a leaf node on a parity $\alpha\in\F_2^n$, we mean replacing it with an internal node labeled by the parity $\alpha$ with two natural leaf nodes as children.}
\end{algorithm}

\begin{proposition} \label{prop:construct-dt-correctness}
	Let $A\sse\F_2^n$ be an arbitrary subset. Given query access to $A$ and $\eps, \tau > 0$, the procedure \ConstructDT$(A,\eps,\tau)$ described in \Cref{alg:constructive-regularity-lemma}:
	\begin{enumerate}
		\item Makes $\poly\pbra{n, 2\uparrow\uparrow\frac{8}{\eps^3}, \frac{1}{\tau}}$ queries to $A$ and does a $\poly\pbra{n, 2\uparrow\uparrow\frac{8}{\eps^3}, \frac{1}{\tau}}$ time computation; and 
		\item With probability $9/10$ outputs a deterministic $\pbra{0, 1, O(n)}$-oracle $\calO_{A'}^A$ for $A'$ where $A'\sse A$ is as in \Cref{prop:green-corollary}.
			\end{enumerate}
\end{proposition}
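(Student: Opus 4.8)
The plan is to analyze \ConstructDT\ by tracking the invariant, maintained across all stages of Step~2, that the internal-node labels of $\Treg$ span a subspace whose orthogonal complement $H$ witnesses the structure guaranteed by \Cref{prop:green-corollary}. I would argue correctness and complexity essentially in tandem with Green's original potential-function argument. First, the potential: let $\Phi$ be the "energy'' $\sum_i \Vol_{H_i}(\text{coset})\cdot \wh{A_{H_i}}(0)^2$ summed over the current cosets (equivalently, the average of $\wh{A\restriction_{H_i}}(0^n)^2$ suitably normalized). One shows that whenever a level is split on a vector $\alpha$ for which $|\wh{A_{H_i}}(\alpha)|$ is large (here, at least $\eps/2$ after the Step~2(c) pruning, since \GL\ is called with parameter $\eps/M$ and then estimates are taken to accuracy $\pm\eps/4$ and thresholded at $3\eps/4$), the potential increases by an additive $\Omega(\eps^2/M)$-type amount on the affected cosets; since $\Phi\in[0,1]$ always, and since at each stage we split on at least one vector as long as at least an $\eps$-fraction of the $\calS_i$ are nonempty, the number of stages is bounded, and hence the codimension of the final $H$ is at most the claimed $2\uparrow\uparrow(8/\eps^3)$ — this is exactly where the tower-type bound comes from, and matching constants to the statement of \Cref{prop:green-corollary} (codimension $2\uparrow\uparrow(1/\eps^3)$) with the slack built into $\delta$ is the fiddly bookkeeping part.

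Next, I would establish the termination guarantee: when Step~2(d) fires, at least a $(1-\eps)$-fraction of the cosets $H_i$ have $\calS_i=\emptyset$. By the GL guarantee (\Cref{thm:km}) applied to $A\restriction_{H_i}$ together with \Cref{fact:annoying-coset-stuff} (which converts Fourier coefficients of $A\restriction_{H_i}$ on $\F_2^n$ into Fourier coefficients of $A_{H_i}$ on the coset, up to the $1/2^k = 1/M$ factor), an empty $\calS_i$ certifies that $A_{H_i}$ has no Fourier coefficient exceeding $\eps$, i.e.\ $A_{H_i}$ is $\eps$-quasirandom. Thus at termination $H$ and its cosets satisfy the hypotheses under which \Cref{prop:green-corollary} produces $A'$: on each coset, Step~3 estimates $\Vol_{H_i}(A)$ to accuracy $\pm\tau/4$ and thresholds at $3\tau/4$, so every coset kept has density $\geq\tau/2$ — after rescaling $\tau$ appropriately this is the "$\geq\tau$ or $=0$'' dichotomy of item~2 of \Cref{prop:green-corollary} — and every discarded coset (either not quasirandom, caught among the $\eps$-fraction, or low density) contributes at most $\eps+\tau$ total volume to $A\setminus A'$, giving item~1. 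The leaf labels of $\Treg$ are exactly the indicators of the kept cosets, so $\Treg(x)\cdot A(x)=A'(x)$ pointwise, which is why $\calO_{A'}^A$ is an \emph{exact} (deterministic, zero-error) oracle; its query cost is $1$ (one call to $A$) and its running time is $O(n)$, dominated by evaluating the at-most-$O(\mathrm{tower})$ parities along a root-to-leaf path of $\Treg$, each costing $O(n)$ — giving the $(0,1,O(n))$-oracle claim.

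For the overall query and time bounds, I would simply total the costs: there are $O(\mathrm{tower})$ stages; each stage makes $M=O(\mathrm{tower})$ calls to \GL, each of which costs $\poly(n,M/\eps,\log(1/\delta))$ queries and time by \Cref{thm:km}; the coefficient-estimation in Step~2(c) and the density estimation in Step~3 are standard Chernoff-bound sampling, costing $\poly(1/\eps,1/\tau,\log(1/\delta))$ each, over $O(M)$ cosets; and $\delta^{-1}$ and $M$ are both $2\uparrow\uparrow O(1/\eps^3)$, so everything collapses into $\poly(n,\,2\uparrow\uparrow(8/\eps^3),\,1/\tau)$. Finally, a union bound over all $O(\mathrm{tower})$ invocations of \GL\ and all $O(\mathrm{tower})$ estimation steps, each with failure probability $\delta=(2\uparrow\uparrow(8/\eps^3))^{-1}/30$, keeps the total failure probability below $1/10$, yielding the "with probability $9/10$'' conclusion. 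The main obstacle I anticipate is not any single step but the quantitative accounting: making the GL accuracy parameter $\eps/M$, the pruning threshold $3\eps/4$, and the potential-increment lower bound fit together so that the stage count is genuinely below $2\uparrow\uparrow(8/\eps^3)$ while leaving enough room in $\delta$ for the union bound — essentially re-deriving Green's proof with explicit constants tuned to the algorithmic setting.
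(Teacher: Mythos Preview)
Your proposal is correct and follows essentially the same approach as the paper's proof: both use Green's potential function $\potfunc[A,H]=\frac{1}{M}\sum_i \wh{A_{H_i}}(0)^2$ to bound the number of stages (and hence the tree depth by a tower), invoke the \GL\ guarantee together with \Cref{fact:annoying-coset-stuff} to certify $\eps$-quasirandomness when $\calS_i=\emptyset$, handle density thresholding by Chernoff sampling, and finish with a union bound over all $O(\mathrm{tower})$ randomized subroutines against the failure parameter $\delta$. Your write-up is, if anything, a bit more careful than the paper's about the per-coset $\Omega(\eps^2/M)$ increment aggregating to $\Omega(\eps^3)$ per stage and about the $\tau/2$-versus-$\tau$ rescaling, but the structure and ingredients are identical.
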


We note that the procedure \ConstructDT\ makes queries to the oracle $A$ in the course of running the Goldreich--Levin algorithm.

\newcommand{\potfunc}{\mathrm{ExpImb}}

\begin{proof}
	
	We first argue that Step~2 in the procedure \ConstructDT\ terminates; this essentially follows from Green's original proof of the regularity lemma in $\F_2^n$. In particular, suppose, at the current stage, the subspace given by the internal nodes of the parity decision tree is $H$, and let $\{H_1, \ldots, H_M\}$ denote the cosets corresponding to the leaves. Consider the potential function
	\[\potfunc[A, H] := \frac{1}{M}\sum_{i=1}^M |\wh{A_{H_i}}\pbra{0^n}|^2,\]
	where we recall that $\wh{A_{H_i}}\pbra{0^n} = \Vol_{H_i}(A)$. Note that $\potfunc[A, H]\in [0,1]$. Informally, $\potfunc$ captures the ``expected imbalance'' of $A$ restricted to the leaf nodes of the tree at the current stage. 
	
	Lemma~2.2 of \cite{Green:05} (alternatively, see \cite{ROD-green-regularity}) states if there exists a leaf node $A_{H_i}$ and a parity $\alpha\in\F_2^n$ such that $\abs{\wh{A_{H_i}}(\alpha)} \geq \eps/2$, then upon splitting all nodes at the current level on the parity $\alpha$---with $H'\leq H$ being the subspace corresponding to the resulting tree---we have 
	\[\potfunc[A,H'] \geq \potfunc[A,H] + \frac{\eps^3}{4}.\]
	It follows that if the condition in Line~2(d) of \ConstructDT doesn't hold, then after Step~2(e), the value of $\potfunc$ increases by at least $4/\eps^3$. It follows that Step~2 can be repeated at most $2\uparrow\uparrow\frac{4}{\epsilon^3}$ times.
		
	Next, note that the Goldreich--Levin call in Step~2(b) makes at most $\poly\pbra{n, 2\uparrow\uparrow\frac{8}{\eps^3}, \frac{1}{\tau}}$ queries to $A$ over the run of \ConstructDT, and each call to Step~2(e) and Step~3 makes $O\pbra{\frac{1}{\eps^2}}$ and $O\pbra{\frac{1}{\tau^2}}$ many queries (via a standard application of the Chernoff bound). The overall query complexity of \ConstructDT\ follows. The runtime is similarly clear. 
		
	Note that we run the Goldreich--Levin algorithm in Step~2(b) on the function $A\restriction_{H_i}$ as opposed to $A_{H_i}$. It follows from \Cref{fact:annoying-coset-stuff} that $A\restriction_{H_i}$ is $\eps/M$-quasirandom if and only if $A_{H_i}$ is $\eps$-quasirandom (where $M$ is the number of cosets at a particular stage of the algorithm). We also note that given query access to $A_{H_i}$, we can simulate query access to $A\restriction_{H_i}$ by checking whether an input $x$ belongs to the coset $H_i$ by querying it on the parity decision tree $\Treg$. 
	
	In the pruning procedure in Step~2(e), the size of each $\calS_i$ is at most $O(1/\eps^2)$. A union bound over the Goldreich--Levin and estimation procedures implies that with probability $9/10$, the function computed by $\Treg$ indicates whether a point $x$ is in a coset $H'$ for which $A_{H'}$ is $\eps$-quasirandom and also $\Vol_{H'}(A) \geq \tau$. It follows that $\calO_{A'}^A$ is an exact oracle for $A'$; it also clearly makes exactly $1$ query to $A$ on any input. 
\end{proof}

\subsection{Approximately Simulating Sumsets}
\label{subsec:approx-sumset-simulation}

Note that \Cref{prop:green-corollary} asserts, for arbitrary $A\sse\F_2^n$, the existence of a structured subset $A'\sse A$ (which is ``almost all of $A$'') and a subspace $H \leq \F_2^n$ such that $A'^{+y}_H$ is $\eps$-quasirandom for all $y \in \F_2^n$. The following lemma indicates why such a decomposition is useful towards our goal of (approximately) simulating sumsets. 
 
\begin{lemma} \label{prop:anindya-bogolyubov}
	Let $A\sse\F_2^n$ be arbitrary and let $H\leq \F_2^n$ be a subspace. Suppose, for $x, y \in \F_2^n$, 
	\begin{enumerate}
		\item $A_{x+H},A{}_{y+H}$ are $\eps$-quasirandom (in the sense of \Cref{def:quasirandom-coset}); and 
		\item $\Vol_{x+H}\pbra{A}, \Vol_{y+H}\pbra{A} \geq \tau$ for some $\tau > 0$.
	\end{enumerate}
	Then we have
	\begin{equation} \label{eq:anindya-bogolyubov}
		\Vol_{x+y+H}\pbra{A + A}\geq 1 - O\pbra{\frac{\eps^2}{\tau^4}}.
	\end{equation}
\end{lemma}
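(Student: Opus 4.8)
The plan is to reduce the statement to a second-moment (variance) computation for a convolution on the subspace $H$.

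First I would set up coordinates on the three relevant cosets. Put $m := n-k$ where $k = \mathrm{codim}(H)$, fix a linear isomorphism identifying $H$ with $\F_2^m$, and use it together with the translations by $x$, $y$, $x+y$ to identify each of $x+H$, $y+H$, $x+y+H$ with $\F_2^m$. Under this identification define $f,g:\F_2^m\to\zo$ by $f(h)=A(x+h)$ and $g(h)=A(y+h)$; these are exactly the indicators $A_{x+H}$ and $A_{y+H}$ transported to $\F_2^m$, so by hypothesis $\wh{f}(0^m)=\Vol_{x+H}(A)\geq\tau$, $\wh{g}(0^m)=\Vol_{y+H}(A)\geq\tau$, and $|\wh{f}(\gamma)|,|\wh{g}(\gamma)|\leq\eps$ for all $\gamma\neq 0^m$. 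The key observation is that the point of $x+y+H$ corresponding to $w\in\F_2^m$ lies in $\pbra{A\cap(x+H)}+\pbra{A\cap(y+H)}$ --- and hence in $A+A$ --- if and only if the number of representations $\sum_{h\in\F_2^m} f(h)\,g(w+h) = 2^m\cdot(f\ast g)(w)$ is nonzero, i.e.\ iff $(f\ast g)(w)>0$. So it suffices to show $(f\ast g)(w)>0$ for all but an $O(\eps^2/\tau^4)$ fraction of $w$.

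Next I would split $(f\ast g)(w)$ into its mean and its fluctuation. By the convolution identity \Cref{eq:convolution-formula} and the Fourier expansion on $\F_2^m$, $(f\ast g)(w)=\sum_\gamma \wh{f}(\gamma)\wh{g}(\gamma)\chi_\gamma(w) = \wh{f}(0^m)\wh{g}(0^m) + E(w)$, where $E(w):=\sum_{\gamma\neq 0^m}\wh{f}(\gamma)\wh{g}(\gamma)\chi_\gamma(w)$. Since $f\ast g\geq 0$ pointwise, whenever $(f\ast g)(w)=0$ we must have $E(w)=-\wh{f}(0^m)\wh{g}(0^m)\leq -\tau^2$, hence $E(w)^2\geq\tau^4$. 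Now bound the second moment of $E$ via Parseval \Cref{eq:parseval-plancharel} on $\F_2^m$: $\Ex_w[E(w)^2]=\sum_{\gamma\neq 0^m}\wh{f}(\gamma)^2\wh{g}(\gamma)^2\leq\eps^2\sum_{\gamma\neq 0^m}\wh{f}(\gamma)^2\leq\eps^2\sum_\gamma\wh{f}(\gamma)^2=\eps^2\,\E[f^2]=\eps^2\,\E[f]\leq\eps^2$, using $|\wh{g}(\gamma)|\leq\eps$ for $\gamma\neq 0^m$ and that $f$ is $\zo$-valued. Markov's inequality then gives $\Pr_w[(f\ast g)(w)=0]\leq\Pr_w[E(w)^2\geq\tau^4]\leq\eps^2/\tau^4$.

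Combining the two steps: the fraction of $w\in\F_2^m$ --- equivalently of points $z\in x+y+H$ --- admitting a representation $z=a+b$ with $a\in A\cap(x+H)$ and $b\in A\cap(y+H)$ is at least $1-\eps^2/\tau^4$, and every such $z$ lies in $A+A$; hence $\Vol_{x+y+H}(A+A)\geq 1-\eps^2/\tau^4 = 1-O(\eps^2/\tau^4)$, as claimed. The only slightly delicate point is the coset Fourier bookkeeping: one must check that $\eps$-quasirandomness of $A_{x+H}$ and $A_{y+H}$ in the sense of \Cref{def:quasirandom-coset} really does translate, under a common affine identification of the cosets with $\F_2^m$, into the bounds $|\wh{f}(\gamma)|,|\wh{g}(\gamma)|\leq\eps$ for $\gamma\neq 0^m$ that feed the variance bound --- this is exactly the change-of-coordinates recorded in \Cref{fact:annoying-coset-stuff}. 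Everything else is a routine first/second-moment estimate.
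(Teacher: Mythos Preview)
Your proof is correct and follows essentially the same approach as the paper: form the convolution $f\ast g$, split it as $\wh{f}(0)\wh{g}(0)+E$, bound the second moment of $E$ by $\eps^2$ via quasirandomness and Parseval, and conclude by Markov/Chebyshev. The only cosmetic differences are that the paper applies Chebyshev at threshold $\tau^2/2$ (rather than Markov on $E^2$ at $\tau^4$) and pulls out $\max|\wh f|^2$ rather than $\max|\wh g|^2$ in the variance bound; your reference to \Cref{fact:annoying-coset-stuff} for the coset bookkeeping is slightly off (that fact compares $f_{H'}$ with $f\restriction_{H'}$), but the change of coordinates you describe is exactly what \Cref{def:quasirandom-coset} is set up to give directly.
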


\begin{proof}
	For ease of notation, define the $\eps$-quasirandom functions $f : x+H \to \zo$ and $g: y+H \to \zo$ as
	\[f:=A_{x+H} \qquad\text{and}\qquad g:=A_{y+H}.\]
	Consider $h := f\ast g$ and note that $\supp(h) = A_{x+H} + A_{y+H}$. From \Cref{eq:convolution-formula}, we have that 
	\begin{equation} \label{eq:h-decomp}
		h = \sum_{\alpha\in H} \wh{f}(\alpha)\wh{g}(\alpha)\chi_\alpha \geq \tau^2 + \underbrace{\sum_{0^n\neq\alpha\in H} \wh{f}(\alpha)\wh{g}(\alpha)\chi_\alpha}_{=: \Gamma}.
	\end{equation}
	Note that $\Ex_{\bx\sim H}\sbra{\Gamma(\bx)} = 0$ and 
	\begin{align*}
		\Ex_{\bx\sim H}\sbra{\Gamma(\bx)^2} = \sum_{0^n\neq\alpha\in H}\wh{f}(\alpha)^2\wh{g}(\alpha)^2
		\leq \max_{0^n\neq\alpha\in H}\wh{f}(\alpha)^2  \pbra{\sum_{0^n\neq\alpha\in H}\wh{g}(\alpha)^2}
		\leq \eps^2
	\end{align*}
	as $f$ is $\eps$-quasirandom. It then follows from Chebyshev's inequality that  
	\[\Prx_{\bx\sim H}\sbra{\abs{\Gamma(\bx)} \geq \frac{\tau^2}{2}} = O\pbra{\frac{\eps^2}{\tau^4}}\qquad\text{and so}\qquad
	\Prx_{\bx\sim H}\sbra{h(\bx) > 0} \geq 1 - O\pbra{\frac{\eps^2}{\tau^4}},
	\]
	completing the proof. 
\end{proof}

\begin{remark}
Note that the lower bound of $1-O(\eps^2/\tau^4)$ in \Cref{eq:anindya-bogolyubov} cannot be improved to $1$, as witnessed by the following example: Let $A \sse \F_2^n$ be defined as
\[A(x) = \begin{cases}
 1 & \sum x_i \geq \frac{n}{2}\\
 0 & \sum x_i \leq \frac{n}{2} - 1	
 \end{cases}
\]
and let $H = \F_2^n$. As $A$ is a symmetric function, $\wh{A}(\alpha)$ only depends on $\sum_i \alpha_i$. It is easy to check using Parseval's identity that $\abs{\wh{A}(\alpha)}\leq O\pbra{\frac{1}{\sqrt{n}}}$, and that $A + A \subsetneq \F_2^n$ (as we clearly have $1^n \notin A + A$).
\end{remark}

\Cref{prop:anindya-bogolyubov} suggests a natural approach towards our goal of approximately simulating sumsets: Given the parity decision tree $\Treg$ as in \Cref{alg:constructive-regularity-lemma}, for every pair of leaves---say, corresponding to cosets $x+H$ and $y+H$---with non-trivial $\Vol_{x+H}(A'),\Vol_{y+H}(A')$, we set $\Vol_{x+y+H}\pbra{A'+A'} = 1$. This procedure is outlined in \Cref{alg:simulate-sumset}; more formally, we have \Cref{prop:approx-sumset-simul}.  

\begin{algorithm}[ht]
\caption{Approximately simulating query access to the sumset $A'+A'$} 
\label{alg:simulate-sumset}

\textbf{Input:} Query access to $A\sse\F_2^n$, quasirandomness parameter $\eps$, density threshold $\tau$
\vspace{0.5em}

\noindent\textbf{Output:} An approximate oracle $\calO_{A'+A'}$ where $A'$ as in \Cref{prop:green-corollary}

\

\Sumset$\pbra{A,\eps,\tau}$:

\begin{enumerate}[rightmargin=1cm]
	\item Obtain $\pbra{\calO_{A'}^A, \Treg}$ via 
	\[\pbra{\calO_{A'}^A, \Treg} \leftarrow \ConstructDT\pbra{A, \eps, \tau}.\] 
	Let $\alpha_i\in\F_2^n$ denote the label associated to internal nodes at depth $i$. 
	\item We will write $\pbra{b_1,\ldots,b_k}$ with $b_i\in\F_2$ to denote the root-to-leaf path obtained by taking the outgoing edge labeled by $b_i$ from the internal node $\alpha_i$,and will identify leaves of $\Treg$ with these tuples. 
	\item Initialize $\Tsum$ as a copy of $\Treg$, and associate all leaves with the identically-$0$ function. 
	\item For all pairs of leaf nodes $(b^{(1)}_1, \ldots, b^{(1)}_k)$ and $(b^{(2)}_1, \ldots, b^{(2)}_k)$ in $\Treg$: 
	\begin{enumerate}
		\item If for both of the leaf nodes in the pair, the function associated with the leaf node is not the identically-$0$, function, then set the function associated to the leaf node $(b^{(1)}_1 + b^{(2)}_1, \ldots, b^{(1)}_k + b^{(2)}_k)$ in $\Tsum$ to be the identically-$1$ function.
	\end{enumerate}
	\item Define the oracle $O_{A'+A'}$ to be the function computed by $\Tsum$. 
\end{enumerate}

\end{algorithm}

\begin{proposition} \label{prop:approx-sumset-simul}
	Let $A\sse\F_2^n$ be an arbitrary subset. Given query access to $A$, and $\eps, \tau > 0$, let $A'\sse A$ as in \Cref{prop:green-corollary}. The procedure \Sumset$(A,\eps,\tau)$ described in \Cref{alg:simulate-sumset}:
	\begin{enumerate}
		\item Makes $\poly\pbra{n, 2\uparrow\uparrow\frac{8}{\eps^3}, \frac{1}{\tau}}$ queries to $A$ and does a $\poly\pbra{n, 2\uparrow\uparrow\frac{8}{\eps^3}, \frac{1}{\tau}}$ time computation; and 
		\item With probability $9/10$, outputs an $\pbra{O\pbra{\eps^2/\tau^4},0, n}$-oracle $\calO^A_{A'+A'}$ for $A'+A'$.
	\end{enumerate}
\end{proposition}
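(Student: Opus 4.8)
\textbf{Proof proposal for \Cref{prop:approx-sumset-simul}.}

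The plan is to analyze \Sumset\ in three pieces, matching the two claims in the proposition. First I would dispense with the complexity bounds in item (1): the only step of \Sumset\ that queries $A$ is the call to \ConstructDT\ in Line~1, so the query complexity and running time are inherited verbatim from item (1) of \Cref{prop:construct-dt-correctness}. The remaining work of \Sumset---copying $\Treg$ into $\Tsum$ and looping over all pairs of leaves---touches only the (constant-codimension) decision tree and makes no further queries; the loop over pairs of leaves runs in time $\poly(2\uparrow\uparrow\frac{8}{\eps^3})$, which is absorbed into the stated bound. The oracle $\calO_{A'+A'}$, on any input $x$, simply evaluates $\Tsum$ by following the parity path $\{(\alpha_i \to \abra{\alpha_i,x})\}$ and outputs the leaf label; this makes $0$ queries to $A$ and runs in time $O(n)$ (reading $x$ and computing $k = O_\eps(1)$ inner products). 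This already pins down the ``$0$'' and ``$n$'' parameters in the oracle signature.

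Next I would establish correctness of the oracle modulo the randomness of \ConstructDT. Condition on the $9/10$-probability event that \ConstructDT\ succeeds, i.e. that $\calO_{A'}^A$ is an exact oracle for $A'$ and that $\Treg$ correctly computes the indicator of the union of those cosets $H_i$ on which $A'$ is supported; by \Cref{prop:green-corollary}, these are exactly the cosets $H_i$ with $\Vol_{H_i}(A') \geq \tau$, and on each of them $A'_{H_i}$ is $\eps$-quasirandom. Let $\mathcal B$ denote this set of ``live'' cosets, identified with their sign-vectors $b \in \F_2^k$ as in the algorithm. By construction $\Tsum$ outputs $1$ on input $x$ iff the coset of $x$ equals $b^{(1)} + b^{(2)}$ for some pair $b^{(1)}, b^{(2)} \in \mathcal B$; equivalently, writing $B := \{b^{(1)} + b^{(2)} : b^{(1)}, b^{(2)} \in \mathcal B\} \subseteq \F_2^k$ for the sumset of $\mathcal B$ inside the quotient $\F_2^n / H \cong \F_2^k$, the oracle $\calO_{A'+A'}$ is the indicator of $\bigsqcup_{b \in B} H_b$, the union of the corresponding cosets of $H$.

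It remains to compare this set with the true sumset $A' + A'$. Since $A' = \bigsqcup_{b \in \mathcal B} A'_{H_b}$ and addition respects the coset structure, $A' + A' = \bigcup_{b^{(1)},b^{(2)} \in \mathcal B} \big(A'_{H_{b^{(1)}}} + A'_{H_{b^{(2)}}}\big)$, and each summand lives in the coset $H_{b^{(1)}+b^{(2)}}$. Thus $A' + A' \subseteq \bigsqcup_{b \in B} H_b$, so the oracle never makes a ``false negative'' error. For the reverse direction, fix any $b \in B$ and any pair $b^{(1)}, b^{(2)} \in \mathcal B$ with $b^{(1)} + b^{(2)} = b$. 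Both cosets $H_{b^{(1)}}, H_{b^{(2)}}$ satisfy the hypotheses of \Cref{prop:anindya-bogolyubov} with the set $A'$ in place of $A$ (quasirandomness by item (3) of \Cref{prop:green-corollary}, density $\geq \tau$ by item (2)), so $\Vol_{H_b}(A' + A') \geq 1 - O(\eps^2/\tau^4)$. Hence within each live coset $H_b$ the oracle and the true sumset disagree on at most an $O(\eps^2/\tau^4)$ fraction, and they agree exactly on every coset outside $\bigsqcup_{b \in B} H_b$; averaging over $\F_2^n$ gives $\dist(\calO_{A'+A'}, A'+A') \leq O(\eps^2/\tau^4)$, which is item (2). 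The main thing to be careful about---though it is not really an obstacle, just bookkeeping---is keeping the quotient picture $\F_2^n / H \cong \F_2^k$ straight so that ``the leaf indexed by $b^{(1)} + b^{(2)}$'' genuinely corresponds to the coset $H_{b^{(1)}} + H_{b^{(2)}}$, and noting that \Cref{prop:anindya-bogolyubov} is applied with $A'$ (not $A$) so that its quasirandomness hypothesis holds on \emph{all} live cosets by virtue of \Cref{prop:green-corollary} rather than merely almost all of them.
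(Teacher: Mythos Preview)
Your proposal is correct and follows exactly the same approach as the paper: item~(1) is inherited from \Cref{prop:construct-dt-correctness}, and item~(2) is obtained by applying \Cref{prop:anindya-bogolyubov} to each pair of live cosets. The paper's own proof is a three-sentence sketch of precisely this argument; you have simply filled in the details (the no-false-negatives/no-false-positives split, the quotient bookkeeping, and the explicit verification of the $0$-query, $O(n)$-time oracle parameters) that the paper leaves implicit.
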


\begin{proof}
	Note that the number of queries made to $A$ follows from \Cref{prop:construct-dt-correctness}, and the runtime is immediate from Step 4. The second item above follows from \Cref{prop:anindya-bogolyubov}. 
\end{proof}

Note that \Cref{thm:main-non-implicit} follows immediately from \Cref{prop:construct-dt-correctness,prop:approx-sumset-simul}. Furthermore, we can easily estimate $\Vol\pbra{A'+A'}$ (where $A'$ as in \Cref{thm:main-non-implicit}) via random sampling. A standard application of the Chernoff bound shows that $O\pbra{\log(1/\delta)/\gamma^2}$ many samples suffice to get a $\pm\gamma$ additive approximation to $\Vol\pbra{A'+A'}$ with probability at least $1-\delta$.


\section{An Implicit Regularity Lemma in $\F_2^n$}
\label{sec:implicit-stuff}

In this section, we present the following ``implicit'' version of \Cref{thm:main-non-implicit} that makes that makes only $O_\eps(1)$ queries, independent of the ambient dimension $n$. 

\begin{theorem} [Main theorem] \label{thm:main-thm-implicit}
	Let $A\sse\F_2^n$ be an arbitrary subset, and let $\eps, \tau > 0$. Given query access to $A$, there exists an algorithm that makes $O_{\eps,\tau}(1)$ queries to $A$ and does an $O_{\eps,\tau}(1)\cdot n$ time computation and outputs with probability at least $9/10$:
	\begin{enumerate}
		\item A $\pbra{0, O_{\eps,\tau}(1), O_{\eps,\tau}(1) \cdot n)}$-oracle $\calO^A_{A'}$ to the indicator function of $A'\sse A$ where $\Vol\pbra{A\setminus A'} \leq \eps + \tau$; and
		\item A $\pbra{O\pbra{\eps^2/\tau^4},O_{\eps,\tau}(1), O_{\eps,\tau}(1) \cdot n)}$-oracle $\calO^A_{A'+A'}$ to the indicator function of the sumset $A'+A'$.
	\end{enumerate}
\end{theorem}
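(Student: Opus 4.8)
The plan is to revisit each ingredient of the query-inefficient algorithm from \Cref{sec:algo-inefficient} and replace every object whose explicit description costs $\Omega(n)$ bits by an \emph{implicit} (oracle) representation that can be evaluated at any point with only $O_{\eps,\tau}(1)$ queries. The three objects that are too expensive to write down explicitly are: (i) the ``large'' Fourier characters $\chi_\alpha$ produced by Goldreich--Levin; (ii) the subspace $H$ (equivalently, the linearly independent vectors $\alpha_1,\dots,\alpha_k$ labeling the internal nodes of $\Treg$); and (iii) the coset representatives in ${\cal B}'$. I would handle these in order.

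First I would establish an \emph{implicit Goldreich--Levin} primitive: given oracle access to $A$ (or to $A\restriction_{H'}$), with $\poly(1/\theta)$ queries one can output a list of probabilistic oracle machines ${\cal O}_1,{\cal O}_2,\dots$ (of size $O(1/\theta^2)$) such that, with high probability, each ${\cal O}_j$ computes some parity $\chi_{\alpha^{(j)}}$ with $|\wh A(\alpha^{(j)})|\ge \theta/2$, every $\alpha$ with $|\wh A(\alpha)|\ge\theta$ is represented, and each ${\cal O}_j$ uses only $\poly(1/\theta)$ queries per evaluation. This is the ``local list correction of the Hadamard code'' statement flagged in the technical overview; the proof is the standard Goldreich--Levin hashing-into-buckets analysis, except that instead of reading off the label $\alpha^{(j)}$ bit by bit, one keeps, for each surviving bucket, the $O(\log(1/\theta))$ random restrictions/subcube-membership queries that define it, and evaluates $\chi_{\alpha^{(j)}}(x)$ on demand via the self-correction identity $\chi_{\alpha^{(j)}}(x)=\mathrm{maj}_{\by}\big[A(x+\by)\cdot A(\by)\cdot(\text{correction from the bucket data})\big]$, repeating $O(\log(1/\delta))$ times to drive the per-call error below $\delta$. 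The key point is that all of this is oblivious to $n$: a query input $x\in\F_2^n$ is supplied from outside, and evaluating a parity on it plus a constant number of oracle calls to $A$ costs $O(n)$ time but only $O_\theta(1)$ queries.

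Next I would build an \emph{implicit} version of \ConstructDT. The parity decision tree $\Treg$ has $2\uparrow\uparrow O(1/\eps^3)$ leaves, so it is of constant \emph{size}; the only expensive thing is that its internal-node labels $\alpha_i$ live in $\F_2^n$. Replace each label by one of the oracle machines from implicit Goldreich--Levin, so that $\Treg$ becomes a constant-size tree whose internal nodes are \emph{oracle-evaluated} parities. Routing an input $x$ to its leaf then costs $O_{\eps,\tau}(1)$ queries. The subtle points to check are: (a) the potential-function argument (Lemma~2.2 of \cite{Green:05}) driving termination only needs that we keep splitting on \emph{some} character of magnitude $\ge\eps/2$ in \emph{some} leaf, which the implicit list still provides; (b) linear independence of the accumulated labels — one cannot test this exactly without explicit vectors, but since there are only $k=2\uparrow\uparrow O(1/\eps^3)$ of them, one can test independence \emph{probabilistically} by evaluating all current parities on $O(k)$ uniform random points and checking that the resulting $\{0,1\}$-vectors are independent, which detects dependence with overwhelming probability and adds only $O_\eps(1)$ queries; (c) the density estimates $\wh\Theta_i=\Vol_{H_i}(A)$ and the Fourier-magnitude estimates are already done by random sampling and cost $O(1/\tau^2)$, $O(1/\eps^2)$ queries respectively, independent of $n$. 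The union bound over the $2\uparrow\uparrow O(1/\eps^3)$ stages and lists (exactly as in the proof of \Cref{prop:construct-dt-correctness}, using the same $\delta$) still yields success probability $9/10$. The output oracle $\calO^A_{A'}(x)=\Treg(x)\cdot A(x)$ then makes $O_{\eps,\tau}(1)$ queries and runs in $O_{\eps,\tau}(1)\cdot n$ time, with $\dist(\calO^A_{A'},A')=0$, giving item~(1).

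Finally, for item~(2) I would implicitly simulate \Sumset. Here the only new issue is that in \Cref{alg:simulate-sumset} a leaf is identified with a tuple $(b_1,\dots,b_k)\in\F_2^k$, and the sumset tree $\Tsum$ puts the $1$-function on the XOR tuple $(b^{(1)}_1+b^{(2)}_1,\dots)$; since $k=O_\eps(1)$, both $\Treg$ and $\Tsum$ are constant-size objects and this bookkeeping is free. To evaluate $\calO^A_{A'+A'}(x)$: route $x$ through $\Tsum$ using the \emph{same} oracle-parities $\alpha_1,\dots,\alpha_k$ (each evaluation $O_{\eps,\tau}(1)$ queries), read the leaf's label. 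Correctness — that this $1/0$-valued function is $O(\eps^2/\tau^4)$-close to $A'+A'$ — is exactly \Cref{prop:anindya-bogolyubov} applied cosetwise, unchanged from the proof of \Cref{prop:approx-sumset-simul}. The main obstacle, and the part deserving the most care, is the implicit Goldreich--Levin step together with item~(b) above: one must verify that maintaining oracle access to the recovered parities (rather than their explicit labels) is consistent across the up to $2\uparrow\uparrow O(1/\eps^3)$ iterations of the regularity-lemma loop, that the per-evaluation error of each parity-oracle can be taken small enough (a further $O(\log(1/\delta'))$ repetition factor) that a union bound over all evaluations performed during the \emph{construction} phase still succeeds, and that the probabilistic linear-independence test never spuriously accepts a dependent set (which would break the depth bound). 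Once these are in place, the query and time bounds are $O_{\eps,\tau}(1)$ and $O_{\eps,\tau}(1)\cdot n$ as claimed.
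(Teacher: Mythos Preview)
Your proposal is correct and follows essentially the same approach as the paper: replace \GL\ by an implicit (local list-correcting) Goldreich--Levin that outputs probabilistic oracle machines for the large parities, build an implicit $\Treg$ whose internal nodes are these parity-oracles, test linear independence of the accumulated parities by evaluating them on a constant number of uniform random points and checking the rank of the resulting matrix, and then carry out \Sumset\ verbatim on the implicit tree. The paper's proof of \Cref{prop:implicit-green-lemma} and the subsequent \ImplicitSumset\ procedure match your steps (a)--(c) and your discussion of item~(2) almost point for point, including the union-bound bookkeeping you flag as the main obstacle.
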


In \Cref{subsec:implicit-gl}, we state an ``implicit'' version of the Goldreich--Levin algorithm (which appears to be a folklore result in coding theory), which we then use in \Cref{subsec:eff-green,subsec:eff-sumset} to obtain query-efficient versions of \Cref{alg:constructive-regularity-lemma,alg:simulate-sumset} in \Cref{subsec:eff-green,subsec:eff-sumset} respectively. 

\subsection{Implicitly Finding Significant Fourier Coefficients}
\label{subsec:implicit-gl}

To explain what we mean by the qualifier ``implicit'', recall  the usual statement of the Goldreich-Levin algorithm (\Cref{thm:km}). Informally, the theorem states that given oracle access to $A \subseteq  \mathbb{F}_2^n$, there exists an algorithm that outputs an \emph{explicit} set $\mathcal{S} \subseteq \mathbb{F}_2^n$ with the elements of $S$ corresponding to the ``significant'' Fourier coefficients of $A$. In the language of coding theory, this a {\em list decoding algorithm} for the Hadamard code. 
 
The theorem as stated, however, is not useful for us---in particular, as our target query complexity is independent of $n$, we cannot hope to obtain $\mathcal{S}$ explicitly. We will instead obtain {\em implicit access} to the set $\mathcal{S}$. We next state the refined guarantee for the Goldreich-Levin algorithm that we require.

\ignore{
\gray{Towards this, we define the notion of a {\em probabilistic oracle machine}. 
 
 \begin{definition}
Corresponding to any $\alpha \in \mathbb{F}_2^n$, let $\chi_\alpha: \mathbb{F}_2^n \rightarrow \mathbb{F}_2$ be the corresponding parity function.  
We call $\mathcal{O}^A$ a probabilistic oracle machine for $\chi_\alpha$, if for any $x \in \mathbb{F}_2^n$, 
 \[
 \Pr [\mathcal{O}^A(x)  = \chi_\alpha(x)] \ge 2/3,
 \]
where the probability is taken over the internal coin tosses of $\mathcal{O}^A$. The query complexity of the machine is the number of oracle calls made by $\mathcal{O}$ to $A$. Note that the probability in the right hand side can be made $1-\delta$ at a cost of increasing the query complexity by a factor of $O(\log (1/\delta))$. 
 \end{definition} 
 }
 } 
 
 \begin{theorem}[Implicit Goldreich--Levin theorem] \label{thm:gl-refined}
 Given oracle access to set $A \subseteq \mathbb{F}_2^n$, significance threshold $\theta$ and confidence parameter $\delta$, the algorithm \ImplicitGL$(A,\eps,\tau,\delta)$ makes $\poly(1/\theta) \cdot \log (1/\delta)$ queries to $A$ and with probability at least $1-\delta$ for some $T \le 4/\theta^2$, outputs $T$ oracle machines $\mathcal{O}_1^A, \ldots, \mathcal{O}_T^A$ with the following guarantee: 
 \begin{enumerate}
 \item For each $1\le i \le T$, there is a distinct $\alpha_i \in \mathbb{F}_2^n$ such that $\mathcal{O}_i^A$ is a probabilistic oracle machine for the function $\chi_{\alpha_i}$. The query complexity of each oracle $\mathcal{O}_i^A$ is $\poly(1/\theta)$.
 \item For each $1 \le i \le T$, $|\widehat{A}(\alpha_i)| \ge \theta/2$. 
 \item For any $\beta \in \mathbb{F}_2^n$ such that $|\widehat{A}(\beta)| \ge \theta$, there is a $1 \le j \le T$, such that $\alpha_j =\beta$. 
 \end{enumerate}

 \end{theorem}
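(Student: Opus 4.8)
The plan is to revisit the standard Goldreich–Levin / Kushilevitz–Mansour algorithm and observe that its combinatorial core — recursively partitioning $\F_2^n$ according to prefixes and estimating the Fourier weight of each bucket via random sampling — never actually needs to "name" the surviving heavy characters; it only needs to test, for candidate buckets, whether the restricted Fourier weight exceeds a threshold. First I would run the usual bucketing procedure to identify a collection of at most $T \le 4/\theta^2$ surviving singleton buckets, each of which corresponds to a distinct $\alpha_i$ with $|\wh A(\alpha_i)| \ge \theta/2$, and such that every $\beta$ with $|\wh A(\beta)| \ge \theta$ is among them; this gives items (2) and (3) exactly as in \Cref{thm:km}, and the query and time bounds are the familiar $\poly(1/\theta)\cdot\log(1/\delta)$ (with the $n$-dependence in the running time, not the query count). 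The one twist is that I would run the bucketing so that the partition of prefixes is refined until each surviving bucket provably contains exactly one heavy character; the bound $|\wh A(\alpha_i)| \ge \theta/2$ and disjointness of the $\alpha_i$ then follow from Parseval exactly as usual.

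The new content is item (1): turning each surviving bucket into an oracle machine $\mathcal{O}_i^A$ that computes $\chi_{\alpha_i}(x)$ on any queried point $x$, using only $\poly(1/\theta)$ queries to $A$ and without ever writing down $\alpha_i$. The key observation here is the standard "self-corrector for the Hadamard code" identity: for the (unknown) heavy character $\alpha$, and for the coset/bucket data that the algorithm has retained, one has access to the function $A$ restricted/shifted in such a way that a $\poly(1/\theta)$-size random sample lets one estimate $\wh{A\restriction}(\,\cdot\,)$-type quantities, and in particular lets one estimate, for any fixed $x$, the correlation that distinguishes $\chi_\alpha(x) = 0$ from $\chi_\alpha(x) = 1$. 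Concretely, the oracle machine on input $x$ draws $\poly(1/\theta)$ uniformly random auxiliary points $\by$, queries $A$ at the appropriate combinations of $x$, $\by$, and the stored prefix/coset information, forms the empirical estimate of the relevant inner product, and outputs the bit that makes this estimate large; a Chernoff bound shows that since the true gap between the two candidate values is $\Omega(\theta^2)$ (the bucket's Fourier weight), $\poly(1/\theta)\cdot\log(1/\delta')$ samples suffice to get the right answer with probability $1-\delta'$. Setting $\delta'$ appropriately and taking a union bound over the $\le 4/\theta^2$ machines — and over the estimation steps in the bucketing phase — yields overall success probability $1-\delta$.

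I expect the main obstacle to be bookkeeping rather than any deep new idea: one must set up the restriction of $A$ to a bucket so that (i) the heavy character of the bucket, as a function on the relevant affine object, is a genuine parity (Hadamard codeword) that is $(1/2 - \Omega(\theta^2))$-close to the sampled function, so that Hadamard self-correction applies with $\poly(1/\theta)$ queries, and (ii) the shifts used in self-correction are implementable by queries to the original oracle $A$ on $n$-bit strings. The cleanest way to handle this, which I would adopt, is to phrase the bucketing in terms of the "first $j$ coordinates fixed, remaining coordinates free" decomposition (as in Kushilevitz–Mansour): a surviving bucket is a prefix $p \in \F_2^j$, the associated heavy character $\alpha$ splits as $(\alpha^{\le j}, \alpha^{>j})$ where $\alpha^{>j}$ is essentially forced, and $\alpha^{\le j}$ is recovered bit-by-bit by $\poly(1/\theta)$-query self-correction on the corresponding restriction of $A$; the oracle machine stores $p$, $j$, and the recovered low-order bits, and on input $x$ simply XORs the stored inner product on the first $j$ bits with a self-corrected estimate of $\langle \alpha^{>j}, x^{>j}\rangle$. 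This is the "local list correction of the Hadamard code" statement alluded to in the introduction, and I would cite \cite{Sudan21,sudtrevad01,KS13} for the fact that this is folklore, while giving the explicit argument above for completeness.
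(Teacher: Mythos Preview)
Your proposal has a genuine gap in the query-complexity analysis. The Kushilevitz--Mansour prefix-bucketing procedure you adopt refines the Fourier spectrum one coordinate at a time, so isolating a single heavy character requires recursing to depth $n$: at each of the $n$ levels you must estimate the weight of the surviving buckets by sampling, and hence the total query count is $\Omega(n)\cdot\poly(1/\theta)$, not $\poly(1/\theta)\cdot\log(1/\delta)$. (The paper itself remarks in \S5 that ``the query complexity of the Kushilevitz--Mansour algorithm grows with $n$.'') Your parenthetical that ``the $n$-dependence [is] in the running time, not the query count'' is simply false for this algorithm. Relatedly, a ``surviving singleton bucket'' is specified by a prefix $p\in\F_2^j$ with $j$ potentially as large as $n$, so storing $p$ as part of the oracle's description already costs $\Theta(n)$ bits and defeats the whole point of an implicit representation.

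The paper instead follows Rackoff's variant of Goldreich--Levin (\Cref{alg:implicit-GL}). One samples $t=\log(1/\theta^2)+O(1)$ uniform points $X_1,\dots,X_t\in\F_2^n$ and, for each of the $2^t=\poly(1/\theta)$ guesses $b\in\F_2^t$ for the tuple $(\chi_\alpha(X_1),\dots,\chi_\alpha(X_t))$, defines the decoder $\calD_b^A(x)=\maj_{\emptyset\neq S\subseteq[t]}\{A(X^S+x)+b^S\}$. Pairwise independence of the $X^S$ shows that for the correct guess, $\calD_b^A$ is $1/10$-close to $\chi_\alpha$; one then runs \BLR{} to discard guesses far from every parity, self-corrects each survivor into a genuine parity oracle $\calO_b^A$, and finally estimates $\langle A,\calO_b^A\rangle$ to enforce items~(2) and~(3). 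The number of guesses, the cost of evaluating $\calD_b^A$, and the cost of the \BLR{}, self-correction, and estimation steps are all $\poly(1/\theta)$ with no dependence on $n$---precisely what coordinate-by-coordinate bucketing cannot deliver. The Hadamard self-correction idea you invoke is indeed used, but it is applied to $\calD_b^A$ (already $1/10$-close to a parity) rather than to any coordinate-restriction of $A$.
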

The crucial feature of \Cref{thm:gl-refined} is that the query complexity of both the routine \ImplicitGL~ as well as the probabilistic oracle machines is independent of $n$ and is just dependent on the significance parameter $\theta$ and confidence parameter $\delta$. Further, note that 
the algorithm \ImplicitGL~ does not just give an oracle for $\alpha_i$ (which would be a procedure which, on input $j \in [n]$, outputs the value of the $j$-th coordinate of $\alpha_i \in \F_2$), but rather it gives an oracle for $\chi_{\alpha_i}$ (which of course, on input $x \in \F_2^n$, outputs the value of $\chi_{\alpha_i}(x) \in \F_2$). In the parlance of coding theory,  \ImplicitGL~ is a {\em constant query} algorithm for {\em local list correction}. 

We note that in the usual formulation of Goldreich-Levin~(see~\cite{goldreich-levin,arora2009computational}), the algorithm outputs all the parities, i.e., the entire set $\mathcal{S}$. As the description size of $\mathcal{S}$ is $\Omega(n)$, the query complexity is necessarily $\Omega(n)$.
However, the formulation in \Cref{thm:gl-refined} can easily be obtained by the obvious modification of Rackoff's analysis~\cite{gol01} of the Goldreich-Levin algorithm
and seems to be folklore in coding theory~\cite{Sudan21}. In fact, a weaker statement, namely that Goldreich-Levin  is a {\em constant query local list decoding} algorithm has already been explicitly noted in literature~\cite{trevisan2004some, KS13}. 

We describe the routine \ImplicitGL~ in detail in \Cref{alg:implicit-GL}. To do so, we first need to define the procedure \BLR. 
\begin{definition}
The procedure $\BLR(\mathcal{D}, \tau_c, \tau_\ell, \kappa)$ takes as input oracle access to $\mathcal{D}: \mathbb{F}_2^n \rightarrow \mathbb{F}_2$, distance parameters $\tau_c < \tau_\ell $ and confidence parameter $\kappa$. With probability $1-\kappa$, $\BLR$ can distinguish between the cases (i) $\mathcal{D}$ is $\tau_c$-close to some parity $\chi$ and (ii) $\mathcal{D}$ is $\tau_\ell$-far from every parity $\chi$. 
\end{definition}
The Fourier analysis based proof~\cite{BCHKS} of the standard linearity tester~\cite{BLR93} can be (easily) used to obtain such a procedure \BLR~ as long as $\tau_c <\tau_\ell/3$. The query complexity of the procedure is 
$\log (1/\kappa) \cdot \mathrm{poly}(1/|\tau_\ell-3\tau_c|)$. 


{At a high level, the routine \ImplicitGL~ starts exactly the same way as of the Goldreich-Levin algorithm---in particular, the standard analysis of Goldreich-Levin shows the following (for Step~3(b)): For any $\alpha$ such that $|\widehat{f}(\alpha)| >\theta$, there is some $b \in \mathbb{F}_2^t$ such that 
\[
\Prx_{\bx \sim \mathbb{F}_2^n} \sbra{\chi_\alpha(\bx) \not = \mathcal{D}_b^A(\bx)} \le  1/10. 
\] 
It easily follows that for any $\alpha$ such that $|\widehat{f}(\alpha)| >\theta$, there is some $b \in \mathbb{F}_2^t$ such that $\calO_b^A$ is a probabilistic oracle for $\chi_\alpha$. Further, for any $b \in \mathbb{F}_2^t$, $\calO_b^A$ is a probabilistic oracle for some parity. In Step~3(d), we compute the correlation between $\calO_b^A$ and $A$ up to $\pm \theta/4$. This implies that all $\calO_b^A$  which survive  satisfy properties (2) and (3) of \Cref{thm:gl-refined}. We leave the detailed analysis to the interested reader. 
 }






\begin{algorithm}[!htp]
\caption{An Implicit Goldreich--Levin Algorithm} 
\label{alg:implicit-GL}

\textbf{Input:} Query access to $A:\F_2^n\to\zo$, confidence parameter $\delta>0$ and significance parameter $\theta>0$. 
\vspace{0.5em}

\noindent\textbf{Output:} Probabilistic oracles $\mathcal{O}^A_1, \ldots , \mathcal{O}^A_T$ for some $T \le 4/\theta^2$. The oracle machines $\mathcal{O}^A_1, \ldots , \mathcal{O}^A_T$ satisfy conditions (1) and (2) from \Cref{thm:gl-refined}. 

\

{\ImplicitGL$\pbra{A,\theta,\delta}$}:

\begin{enumerate}[rightmargin=1cm]
	\item Let 
	\[t = 
	\log \bigg( \frac{1}{\theta^2} \bigg) + O(1)\] 
	and initialize $\calS = \emptyset$. 
	\item Pick $X_1,\ldots,X_t \in \F_2^n$ uniformly at random.
	\item For all $b := (b_1,\ldots, b_t) \in \F_2^t$:
	\begin{enumerate}
		\item For all $\emptyset\neq S\sse[t]$:
		\begin{enumerate}
			\item Define $X^S := \sum_{i\in S} X_i$.
			\item Define $b^S := \sum_{i\in S} b_i$.
		\end{enumerate}
		\item Define $\calD_b^A: \F_2^n\to\zo$ as 
		\[\calD_b^A(x) := \maj_{\emptyset\neq S\sse[t]} \bigg\{ A\pbra{X^S+x} + b^S \bigg\}. \]

		\item Define ${{\delta_1:= \frac{\delta \theta^2}{4}}}$. Run $\BLR\pbra{\calD_b^A, 1/20, 1/5, \delta_1}$. If $\BLR$ does not accept, discard $\calD_b^A$.

\item Define $\calO_b^A: \mathbb{F}_2^n \rightarrow \{0,1\}$ as follows: choose $y_1, \ldots, y_{R} \in \mathbb{F}_2^n$ where $R = \Theta(\log(1/\delta))$. 
\[
\calO_b^A(x)  := \maj_{1 \le j \le R} \bigg\{ \calD_b^A(x+y) + \calD_b^A(y) \bigg\}.
\]
	\item Estimate $\wh{\Theta}_b := \abra{A, \calO_b^A}$ up to an additive error of $\pm\theta/4$ and confidence $\delta_1$. If the estimate $\wh{\Theta}_b < 3\theta/4$, discard $\mathcal{O}_b^A$.		
	\end{enumerate}
	\item Output all $\mathcal{O}_b^A$ which survive. 
\end{enumerate}

\end{algorithm}

\subsection{A Query-Efficient Version of \Cref{alg:constructive-regularity-lemma}}
\label{subsec:eff-green}

Recall that \Cref{alg:constructive-regularity-lemma} takes in as input query access to $A\sse\F_2^n$, a quasirandomness parameter $\eps >0 $, and a density threshold $\tau > 0$, and outputs a $\pbra{0, 1, O(n)}$-oracle $\calO_{A'}^A$ for $A'$ where $A'\sse A$ is as in \Cref{prop:green-corollary}. The value of this oracle $\calO_{A'}^A$ on an input $x \in \F_2^n$ is obtained by routing $x$ through the decision tree $\Treg(x)$ (recall that each internal node of $\Treg(x)$ is labeled by an ``explicit'' parity function $\chi_{\alpha_i}$ obtained from some call to the  \GL~algorithm), and outputting the value $A(x)$ if a 1-leaf is reached (if a $0$-leaf is reached the output is 0). This call to $A$ at the leaf that $x$ reaches is why $\calO_{A'}^A$ makes one (and only one) call to the oracle for $A$.

In contrast, in the query-efficient regime we cannot use the standard  \GL~algorithm because its $\Omega(n)$ query complexity is prohibitively high; instead we replace each call to \GL~with a call to \ImplicitGL.  While \GL~returns explicit parity functions which label the various nodes of $\Treg$, the \ConstructImplicitDT~procedure constructs an ``implicit'' decision tree in which each node queries some probabilistic oracle machine (that was returned by \ImplicitGL) to obtain the value of the desired parity function.  Consequently, a call to the oracle $\calO_{A'}^A$ produced by \ConstructImplicitDT~makes $d \cdot \ell + 1$ calls to $A$, where $d$ is the depth of the implicit decision tree and $\ell$ is the number of oracle calls to $A$ that are made by each parity oracle produced by \ImplicitGL. Crucially, both $d$ and $\ell$ are values that are $O_{\eps,\tau}(1)$ and completely independent of $n$.

 In addition to constructing an implicit decision tree, \ConstructImplicitDT~also needs to check for linear independence of the obtained parity oracles (see Step~2(e)(i) of \Cref{alg:constructive-regularity-lemma}) in a query-efficient way.  We detail the performance guarantee of \ConstructImplicitDT~in the following proposition:
 
\begin{proposition} \label{prop:implicit-green-lemma}
	Let $A\sse\F_2^n$ be an arbitrary subset. Given query access to $A$ and $\eps, \tau > 0$, there exists an algorithm \ConstructImplicitDT~that:
	\begin{enumerate}
		\item Makes $O_{\eps,\tau}(1)$ queries to $A$ and does an $O_{\eps,\tau}(1)$ time computation; and 
		\item With probability $9/10$, outputs a probabilistic $\pbra{0, O_{\eps,\tau}(1), O(n)}$-oracle $\calO_{A'}^A$ for $A'$ where $A'\sse A$ is as in \Cref{prop:green-corollary}.	\end{enumerate}
\end{proposition}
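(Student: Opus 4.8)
The plan is to follow the structure of the query-inefficient \ConstructDT~procedure (\Cref{alg:constructive-regularity-lemma}) line by line, replacing each subroutine that incurs an $\Omega(n)$ cost with an ``implicit'' counterpart, and then re-run the potential-function argument from \Cref{prop:construct-dt-correctness} with the error probabilities bookkept so that the whole thing closes with probability $9/10$. The three places where \ConstructDT~pays $\Omega(n)$ queries are: (i) the call to \GL~in Step~2(b), which we replace by \ImplicitGL~from \Cref{thm:gl-refined}; (ii) writing down the explicit parity labels $\alpha_i$ of the internal nodes of $\Treg$, which we instead keep only as probabilistic oracle machines $\calO_i^A$ for the $\chi_{\alpha_i}$; and (iii) the linear-independence check in Step~2(e)(i), which must now be done given only oracle access to the parities rather than their explicit descriptions. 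Everything else --- the density estimates in Steps~2(c), 3, and the pruning in Step~2(c) --- is already done by random sampling and costs $O_{\eps,\tau}(1)$ queries, so those carry over unchanged.

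First I would set up the implicit decision tree $\Treg$: it has exactly the same combinatorial shape as in \ConstructDT, except that each internal node at depth $i$ is annotated with (a pointer to) an oracle machine $\calO_i^A$ for the relevant parity $\chi_{\alpha_i}$, obtained from a call to \ImplicitGL$(A\restriction_{H'},\eps/M,\delta)$ run on the appropriate coset; as in \ConstructDT, query access to $A\restriction_{H'}$ is simulated from query access to $A$ by routing the input through the portion of the tree built so far. Note that routing an input $x$ through $\Treg$ now requires evaluating $\calO_1^A(x),\ldots,\calO_d^A(x)$, each of which makes $\poly(1/\theta) = O_\eps(1)$ queries, so a single evaluation of $\calO_{A'}^A$ makes $d\cdot\poly(1/\theta) + 1 = O_{\eps,\tau}(1)$ queries; this gives the query complexity claim in item~(2) and the running time is $O_{\eps,\tau}(1)\cdot n$ since each parity oracle must at least read its $n$-bit input. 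For the linear-independence check in (iii): given the current collection of parity oracles $\calO_1^A,\dots,\calO_j^A$ (and a candidate new one $\calO^A$), the span of $\{\chi_{\alpha_1},\dots,\chi_{\alpha_j}\}$ is exactly the set of parities obtainable as products $\prod_{i\in S}\chi_{\alpha_i}$ over $S\subseteq[j]$; so to test whether $\chi_\alpha\in\mathrm{span}$, for each of the $2^j$ subsets $S$ we test whether the function $x\mapsto \calO^A(x)\cdot\prod_{i\in S}\calO_i^A(x)$ is (close to) the constant-$1$ function by random sampling --- equivalently, estimate $\Ex_{\bx}[(-1)^{\langle\alpha+\sum_{i\in S}\alpha_i,\bx\rangle}]$ and check whether it is near $1$. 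Since $j \le d = 2\uparrow\uparrow\frac{4}{\eps^3}$, there are only $O_\eps(1)$ subsets to check and each check costs $O_\eps(1)$ samples, each sample costing $O_\eps(1)$ queries; so the whole linear-independence test is $O_\eps(1)$ queries.

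The correctness argument is then essentially that of \Cref{prop:construct-dt-correctness}, with the union bound enlarged: by \Cref{thm:gl-refined}, each call to \ImplicitGL, with confidence parameter $\delta$, returns oracle machines whose associated labels $\alpha_i$ satisfy exactly the Goldreich--Levin guarantee ($|\widehat{A\restriction_{H'}}(\alpha_i)|\ge\theta/2$, and every $\beta$ with $|\widehat{A\restriction_{H'}}(\beta)|\ge\theta$ appears), so \Cref{fact:annoying-coset-stuff} lets us translate this into the statement about $\widehat{A_{H'}}$ that the potential-function argument needs; the potential $\potfunc$ increases by $\ge\eps^3/4$ at each non-terminating stage (Lemma~2.2 of~\cite{Green:05}), so there are at most $2\uparrow\uparrow\frac{4}{\eps^3}$ stages and at most $O_\eps(1)$ total internal nodes, hence at most $O_\eps(1)$ total invocations of \ImplicitGL, of the linear-independence test, and of the sampling-based density/pruning estimates. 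Choosing each confidence parameter to be $\delta$ as defined in Step~1 of \Cref{alg:constructive-regularity-lemma} (so that $O_\eps(1)$ of these events together fail with probability $<1/10$), a union bound gives that with probability $\ge 9/10$ all subroutines behave as intended, and on that event the tree $\Treg$ computed exactly identifies the cosets $H'$ on which $A_{H'}$ is $\eps$-quasirandom and $\Vol_{H'}(A)\ge\tau$ --- i.e.\ it computes the indicator of the set $\{x : x\in H', H' \text{ one of the ``good'' cosets}\}$ --- so $\calO_{A'}^A(x) := \Treg(x)\cdot A(x)$ is a $(0,O_{\eps,\tau}(1),O(n))$-oracle for exactly the set $A'$ of \Cref{prop:green-corollary}. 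I expect the main obstacle to be making item (1) of the oracle guarantee (the ``$\delta = 0$'', exact-oracle claim) rigorous: the parity oracles returned by \ImplicitGL~are only \emph{probabilistic} oracle machines, so on a $\delta$-fraction of inputs $\calO_i^A(x)$ may disagree with $\chi_{\alpha_i}(x)$, which at first glance seems to corrupt $\Treg$ on a small fraction of inputs. The fix --- and the point that needs care --- is that the \emph{set} $A'$ is defined purely in terms of which subspace $H$ and which cosets were selected during the construction (a finite, fixed object once the construction's coin tosses are fixed), and ``outputting an oracle for $A'$'' in the sense of \Cref{def:prob-oracle-machine} only requires that for each fixed input $x$, $\Pr[\calO_{A'}^A(x) = A'(x)] \ge 2/3$ over the oracle's \emph{own} internal randomness; boosting each internal parity-oracle's per-point success probability to $1 - \frac{1}{10d}$ (at an $O(\log d) = O_\eps(1)$ multiplicative cost in queries) makes $\Treg(x)$ correct with probability $\ge 9/10$ for every fixed $x$, which is all that is needed. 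So the ``$0$'' in the triple refers to the distance being zero in the idealized sense that the oracle \emph{targets} the exact set $A'$ (with each point answered correctly w.h.p.\ over internal coins), matching how \Cref{thm:main-thm-implicit}'s item~(1) is stated.
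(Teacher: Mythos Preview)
Your proposal is correct and follows essentially the same approach as the paper: replace \GL~by \ImplicitGL, keep the internal node labels as probabilistic parity oracles rather than explicit vectors, simulate $A\restriction_{H_i}$ by routing through the partial tree, and re-run the $\potfunc$ potential argument with an enlarged union bound.

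The one genuine (but minor) point of difference is the linear-independence test. You propose checking, for each of the $2^j$ subsets $S\subseteq[j]$, whether $\chi_\alpha = \prod_{i\in S}\chi_{\alpha_i}$ via sampling. The paper instead evaluates all $k$ parity oracles on $N$ independent uniform random points, forms the resulting $k\times N$ matrix over $\F_2$, and checks whether its rank is $k$; if the parities are dependent the rank is always $<k$, and if they are independent the rank is $k$ except with probability at most $2^{k^2-N}$. Both methods cost $O_\eps(1)$ queries since $j,k\le d = O_\eps(1)$, so the distinction is cosmetic; the rank test is a bit cleaner in that it avoids the explicit $2^j$ enumeration, but your version is equally valid. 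Your discussion of why the output is a $(0,\cdot,\cdot)$-oracle despite the internal parity oracles being probabilistic --- namely that ``distance $0$'' is compatible with \Cref{def:prob-oracle-machine} once each per-node oracle is boosted to success $1-\tfrac{1}{10d}$ --- is in fact more explicit than what the paper says (it only remarks that the output oracle is probabilistic rather than deterministic).
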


\begin{proof}
The \ConstructImplicitDT~procedure is obtained by modifying the \ConstructDT~procedure presented in \Cref{alg:constructive-regularity-lemma} in the following ways.

	\begin{enumerate}
	
	\item In Line~2(a) of \ConstructDT, instead of maintaining a list of explicit cosets $H_1,\dots,H_M$, the algorithm maintains a list of probabilistic oracle machines ${\cal O}^A_1,\dots,{\cal O}^A_{\log M}$ (obtained from calls to \ImplicitGL) for the $\log M$ parities which define the cosets $H_1,\dots,H_M$.
	
	\item In Line~2(b), to simulate access to $A \restriction_{H_i}$ on an input $x$, the algorithm queries the $\log M$ oracle machines and uses the obtained responses to determine whether or not $x$ belongs to the relevant coset.\ignore{we write $A \restriction_{\vec{{\cal O}}^A_i}$ to denote this simulated version of $A \restriction_{H_i}$.} In addition, each call to \GL$\pbra{A\restriction_{H_i}, \epsilon/M, \delta}$ in Step~2(b) is replaced with a call to \ImplicitGL$\pbra{A\restriction_{H_i} , \epsilon/M, \delta}$.  Note that each set ${\cal S}_i$ produced by a call to \ImplicitGL~is now a set of oracles for parity functions.

	\item Each estimate of $\abs{\wh{A_{H_i}}(\alpha)}$ in Line~2(c) is obtained by random sampling, using the simulated version of $A \restriction_{H_i}$ described above and the oracle for the parity function for $\chi_\alpha.$

		\item In Line~2(e)(i), since the algorithm does not explicitly have the vectors in $\F_2^n$ that define the parity functions, it instead use the following simple sampling-based procedure to check linear independence:
		
		\begin{itemize}
			\item Given a collection of oracles $\{\calO_{\chi_1}, \calO_{\chi_2},\ldots,\calO_{\chi_k}\}$ where each $\chi_i$ is some parity function, the algorithm queries all of them on $N$ independent uniform random points in $\F_2^n$ and builds the corresponding $k \times N$ matrix with entries in $\F_2.$
			\item Then the algorithm checks if the rank of this matrix is $k$.
		\end{itemize}
		It is clear that if the parities $\{\alpha_i\}$ are not linearly independent, then the $k\times N$ matrix constructed by this procedure will not have rank $k$. On the other hand, a simple probabilistic argument shows that if the $k$ parities are linearly independent, then the matrix will have rank $k$ except with failure probability at most $2^{k^2 - N}$.
	
	\item In Line~3 the estimate of $\Vol_{H_i}(A)$ is obtained using the $\log M$ oracle machines mentioned above in the obvious way; and
	
	\item Finally, the output oracle ${\cal O}^A_{A'}$ is the obvious analogue of $\Treg \cdot A$ where again the $\log M$ oracle machines are used to route inputs through the implicit decision tree to the correct coset.
		\end{enumerate}

	The analysis of correctness is essentially the same as that of \Cref{prop:construct-dt-correctness}. We note that while \ConstructDT~outputs a deterministic oracle, \ConstructImplicitDT~outputs a probabilistic oracle (because of the probabilistic oracles for parity functions that it uses).  For the query complexity, a tedious but straightforward inductive argument shows that the values of $\delta$ (for each call to \ImplicitGL) and $N$ (for each execution of Line~2(e)(i)) can be taken to be independent of $n$, yielding the claimed query complexity.
\end{proof}

\subsection{A Query-Efficient Version of \Cref{alg:simulate-sumset}}
\label{subsec:eff-sumset}

Finally, the query-efficient version of \Cref{alg:simulate-sumset}, which we call \ImplicitSumset, works in the obvious way. In Line~1, the call to \ConstructDT~is replaced by a call to \ConstructImplicitDT, and the ``explicit'' decision tree $\Treg$ is replaced by the ensemble of parity oracles corresponding to the coset decomposition.  We observe that while in the explicit algorithm \Sumset, the function $\Tsum$ can be evaluated on an input $x \in \F_2^n$ without making any calls to $A$, in our implicit setting we need to query the ensemble of parity oracles (and hence make queries to $A$) for each evaluation of $\Tsum$ on an input $x$ (to route $x$ to the correct leaf node in the implicit tree for $\Tsum$).
\Cref{thm:main-thm-implicit} follows from \Cref{prop:implicit-green-lemma}
 and the obvious analogue of \Cref{prop:approx-sumset-simul} for \ImplicitSumset.


\def\Conv{\mathrm{Conv}}

\section{Conclusion and Future Work}

Our results suggest a number of interesting directions for future work.   In particular, a broad goal is to develop  query-efficient procedures for simulating oracle access to other types of sumsets, or sumsets over other domains.  Our approach extends relatively straightforwardly to the sumset $A+B$ for distinct sets $A,B \subseteq \F_2^n$ given access to oracles to both $A$ and $B$, and likewise to the iterated sumset $A + \cdots + A = kA$ for any constant $k$.  

A more ambitious extension would be to handle the sumset $A+A$ when $A$ is an arbitrary subset of some other Abelian (or potentially non-Abelian) group $G$.  Green's regularity lemma is known to hold for general finite Abelian groups \cite{Green:05}, but to obtain constant query complexity independent of $|G|$ via our approach it seems that one would need an ``implicit'' procedure for finding large Fourier coefficients of functions from $G$ to $\R$. As observed in \cite{DGKS08}, the algorithm of Goldreich and Levin does not generalize to finding large Fourier coefficients over arbitrary finite groups. There is an alternative algorithm, due to Kushilevitz and Mansour~\cite{kusman93}, for finding large Fourier coefficients of functions $\F_2^n \to \R$ which has been generalized to arbitrary finite Abelian groups $G$ by Akavia et al.~\cite{akagolsaf03}, but the query complexity of the Kushilevitz-Mansour algorithm grows with $n$ and the query complexity of the Akavia et al.~algorithm grows with $|G|.$ Developing a constant-query ``implicit'' version of the algorithm of Akavia et al. for general finite groups is an interesting specific direction for future work.  

Yet another intriguing problem, as mentioned in \Cref{sec:motivation}, is to try to develop a query-efficient algorithm for simulating an oracle to ${\frac {A+A} 2}$ {(where addition denotes the Minkowski sum)} or $\Conv(A)$ (the convex hull of $A$) when $A$ is a subset of $\R^n$ (and we view $\R^n$ as endowed with the standard Normal $\calN(0,1)^n$ distribution).

\section*{Acknowledgements}

A.D.~is supported by NSF grants CCF-1910534, CCF-1926872, and CCF-2045128.  S.N.~is supported
by NSF grants CCF-1563155 and by 
CCF-1763970.  R.A.S.~is supported by NSF grants CCF-1814873, IIS-1838154,
CCF-1563155, and by the Simons Collaboration on Algorithms and Geometry.  This
material is based upon work supported by the National Science Foundation under
grant numbers listed above. Any opinions, findings and conclusions or
recommendations expressed in this material are those of the authors and do not
necessarily reflect the views of the National Science Foundation (NSF). 

\bibliography{allrefs}{}
\bibliographystyle{alpha}
	
\end{document}